\documentclass[12pt]{article}

\usepackage[english]{babel}
\usepackage{amstext}
\usepackage{amsmath,amssymb,amsthm,bm, float}
\usepackage{graphicx,psfrag,epsf}
\usepackage{enumerate}
\usepackage{color, soul}
\usepackage[colorinlistoftodos]{todonotes}
\usepackage{natbib}
\usepackage{url}
\usepackage{comment}

\newtheorem{proposition}{Proposition}
\newtheorem{theorem}{Theorem}
\newtheorem{remark}{Remark}
\newcommand{\blind}{1}
\newcommand{\V}{{\rm Var}}
\usepackage{enumitem}
\usepackage{ulem}
\usepackage{multirow}



\setcounter{totalnumber}{50}
\setcounter{topnumber}{50}
\setcounter{bottomnumber}{50}

\addtolength{\oddsidemargin}{-.75in}%
\addtolength{\evensidemargin}{-.75in}%
\addtolength{\textwidth}{1.5in}%
\addtolength{\textheight}{1in}%
\addtolength{\topmargin}{-.8in}%

\newcommand{\E}{{\rm E}}
\newcommand{\bD}{\boldsymbol{D}}
\newcommand{\upperRomannumeral}[1]{\uppercase\expandafter{\romannumeral#1}}

\numberwithin{equation}{section}

\let\OLDthebibliography\thebibliography
\renewcommand\thebibliography[1]{
  \OLDthebibliography{#1}
  \setlength{\parskip}{0pt}
  \setlength{\itemsep}{7pt}
}
\pdfminorversion=4

\begin{document}

\def\spacingset#1{\renewcommand{\baselinestretch}%
{#1}\small\normalsize} \spacingset{1.00}

\if1\blind
{
\title{Calibration Concordance for Astronomical Instruments\\ via Multiplicative Shrinkage}
\author{Yang Chen\thanks{Yang Chen is Assistant Professor, Department of Statistics and Michigan Institute for Data Science (MIDAS), University of Michigan, Ann Arbor, MI 48109; email: ychenang@umich.edu.}, Xiao-Li Meng\thanks{Xiao-Li Meng is Whipple V. N. Jones Professor of Statistics, Harvard University, Cambridge, MA 02138.}, Xufei Wang\thanks{Xufei Wang was a Ph.D. candidate, Department of Statistics, Harvard University, Cambridge, MA 02138.}, David A. van Dyk\thanks{David A. van Dyk is a Professor of Statistics and Head of the Department of Mathematics at Imperial College London, London, UK SW7 2AZ.},\\ Herman L. Marshall\thanks{Herman Marshall is Astrophysicist, MIT Kavli Institute, Cambridge, MA 02139.}, Vinay L.\ Kashyap\thanks{Vinay Kashyap is Astrophysicist, Harvard-Smithsonian Center for Astrophysics, Cambridge, MA 02138.}}
  \maketitle
} \fi

\if0\blind
{
  \bigskip
  \bigskip
  \bigskip
  \begin{center}
{\LARGE\bf Calibration Concordance for Astronomical Instruments\\ via Multiplicative Shrinkage}
\end{center}
  \medskip
} \fi

\begin{abstract}
Calibration data are often obtained by observing several well-understood objects simultaneously with multiple instruments, such as satellites for measuring astronomical sources. Analyzing such data and obtaining proper concordance among the instruments is challenging when the physical source models are not well understood, when there are uncertainties in ``known'' physical quantities, or when data quality varies in ways that cannot be fully quantified. Furthermore, the number of model parameters increases with both the number of instruments and the number of sources. Thus, concordance of the instruments requires careful modeling of the mean signals, the intrinsic source differences, and measurement errors. In this paper, we propose a log-Normal model and a more general log-$t$ model that respect the multiplicative nature of the mean signals via a half-variance adjustment, yet permit imperfections in the mean modeling to be absorbed by residual variances. We present analytical solutions in the form of power shrinkage in special cases and develop reliable Markov chain Monte Carlo (MCMC) algorithms for general cases, both of which are available in the \texttt{Python} module \textit{CalConcordance}. We apply our method to several datasets including  a combination of observations of \textit{active galactic nuclei} (AGN) and spectral line emission from the \textit{supernova remnant} E0102, obtained with a variety of X-ray telescopes such as {\sl Chandra}, XMM-{\sl Newton}, {\sl Suzaku}, and {\sl Swift}. The data are compiled by the \textit{International Astronomical Consortium for High Energy Calibration} (IACHEC). We demonstrate that our method provides helpful and practical guidance for astrophysicists when adjusting for disagreements among instruments.  

\end{abstract}

\noindent%
{\it Keywords:}  Adjusting attributes; shrinkage estimator; Bayesian hierarchical model; log-Normal model; half-variance adjustment; log-$t$ model.

\spacingset{1.45} 

\section{Introducing Calibration Concordance}
\label{section:introduction}

The calibration of instruments is fundamental for comparing or combining measurements obtained with different instruments. Typically, calibration is conducted by using each of several instruments to measure one or more well-understood objects, e.g., astronomical sources. The resulting data are used to develop  adjustments that can be applied to future observations to obtain reliable absolute measurements. Convenient adjustments, such
as ad hoc affine or ratio adjustments, however, often result in poor calibration, and without justifiable quantification of the calibration error that is essential for assessing the uncertainty of the final estimates of interest. The main difficulty of deriving reliable adjustments for instruments springs from the variations that are intrinsic to the sources and to the instruments, in addition to individual measurement errors. 

First, the physical models, derived using various approximations based on scientists' current understandings of the instruments, may not be as reliable as we hope. Second, ``known'' physical quantities are typically estimates themselves; even when their estimated errors are available, standard plug-in estimators and error propagation techniques may lead to biased and often overly optimistic results. Third, data quality varies in ways that cannot be fully quantified, especially across instruments or in the presence of outliers. Last, the number of unknown model parameters increases with the number of instruments and the number of sources, leading to well-known model challenges. Together these challenges and subtleties expose that, although calibration problems have a long history, principled statistical adjustments are not in routine use or even understood. 

This paper attempts to fill this gap for a variety of astronomical instruments, by developing  hierarchical models that respect the physical models for the mean signals, while permitting the modeling imperfections to be captured by residual variances. We build effective fitting algorithms 
and a software package, \textit{CalConcordance}, which are used to test our models via simulated data, and then applied to several datasets from the \textit{International Astronomical Consortium for High Energy Calibration} (IACHEC). The intended readers are both statisticians and astrophysicists.

\subsection{Calibration Concordance for Astronomical Instruments} 
In astrophysics, various instruments such as telescopes are used by different teams of scientists to understand intrinsic properties of astronomical objects, i.e., sources such as stars. Although it is possible to make relative comparisons of different sources observed with the same instrument, unless the instruments are properly calibrated~\citep{sembay2010defining}, we cannot make reliable absolute measurements or make comparisons of sources observed with different instruments. Therefore, calibration of different instruments is an important, and on-going, problem for astrophysicists~\citep[e.g.,][]{SewardLagacy1992,MatthewsProcSPIE2010, nevalainen2010cross,tsujimoto2011cross,read2014cross,schellenberger2015xmm,madsen2016iachec}. 

As an example, space-based (e.g., X-ray) telescope calibration~\citep{schwartz2014invited} is handled in two phases: first, under controlled laboratory conditions (``ground'' calibration), and second, while in space using astrophysical sources (``in-flight'' calibration; see~\citet{GuainazziJATIS2015}). At each phase, the same set of well-understood sources is observed with multiple instruments. The intent of in-flight calibration is usually to verify ground calibration but imperfect laboratory conditions and evolving instrument characteristics (while in-flight) may result in  discrepancies between different telescopes. The task of developing reliable adjustments for astronomical instruments based on observing multiple sources with multiple instruments is known as the \textit{calibration concordance problem}, which aims to develop a \textit{concordance} in the calibration among these astronomical instruments.  
This paper aims to provide a statistically principled solution and it is a joint effort between statisticians and astrophysicists, both of whose expertise are critical for  appropriately quantifying the uncertainties while incorporating scientific knowledge and judgments. 

For the calibration problem discussed in this paper, the following two concepts are essential.  
\begin{itemize}
\item \textbf{Flux of an astronomical source.} The absolute flux is the quantity of luminous energy incident upon the aperture of a telescope per unit area per unit time. The absolute flux of an astronomical source depends on the luminosity of the object and its distance from the Earth, both of which are intrinsic to the object. For a fixed source spectrum, i.e., the distribution of photon energies, the \textit{measured flux} is proportional to the number of photons \textit{detected} by an astronomical instrument. If the spectrum changes, or the detector on the instrument changes, then so do the number of detected photons and the measured flux.
\item \textbf{Effective Area for an instrument.} The geometric area of a telescope (instrument) is an upper bound on its capacity to collect photons. 
Many factors can reduce the efficiency of photon collection, including mirror reflectivity, structural obscuration, filter transmission, detector sensitivity, etc.  This reduction in efficiency is also photon-energy dependent. The \textit{Effective Area} is the equivalent geometric size of an ideal detector that would have the same collection capability and it is empirically measured or theoretically calculated and tabulated as a function of energy. An instrument's Effective Area is used to estimate the absolute flux of an astronomical source given its measured flux: the estimated absolute flux is the measured flux divided by the Effective Area. Since the Effective Area varies with photon-energy, astronomers often compare different energy bands in the way that we describe comparing different instruments, a convention we also adopt~\citep{GeorgeLagacy1992,GraessleProcSPIE2006}. 
\end{itemize}

The calibration problem arises because the Effective Areas of the instruments are not known precisely~\citep{drakea2006monte,kashyap2008handle,lee2011accounting,xu2014fully}, and hence different instruments can yield substantially different \textit{estimates of absolute fluxes} for the same unvarying source even after accounting for the measurement uncertainties in measured fluxes. This is manifested in Figure~\ref{fig:mediumraw}, which shows the logarithm of estimates of absolute fluxes of three sources (panels 1-3) using three instruments, ``pn'', ``MOS1'' and ``MOS2'', taken from the XCAL data that we describe in detail in Section~\ref{section:hmsdata}.
\begin{figure}[tbph]
\centering
\includegraphics[width=\textwidth]{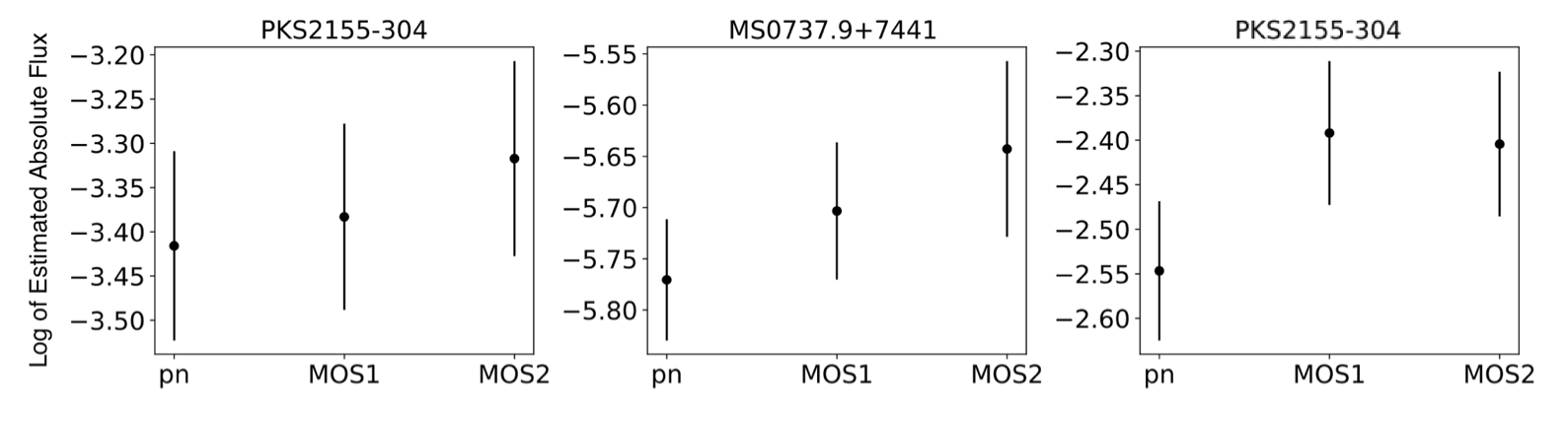}
\caption{Using (natural) logarithm of measured fluxes, correcting for existing ``known'' Effective Areas, to estimate log absolute fluxes. Measured fluxes are collected with three instruments ``pn, MOS1, MOS2'' for each of three sources, labeled on top of each panel; see Section~\ref{section:hmsdata}.  Estimates are given by the dots with approximate 95\% confidence intervals. The differences among the estimates are particularly pronounced in the third panel. The estimates from ``pn'' is systematically smaller than those from ``MOS1'' and ``MOS2'', illustrating the need for adjustments.}
\label{fig:mediumraw}
\end{figure}
Therefore, the problem of calibration concordance among different instruments is equivalent to reliably estimating the Effective Area of each instrument. By \textit{reliably} we mean that, after proper adjustments of the Effective Areas, instruments measuring a common source should agree within \textit{stated and scientifically acceptable} statistical uncertainty on the absolute flux of each source.

\subsection{A Multiplicative Physical Model}

Suppose we observe photon counts, $\{c_{ij}\}$, where $i$ indexes $N$ instruments and $j$ indexes $M$ objects/sources. The observed photon count $c_{ij}$ is known to follow a Poisson model with intensity $C_{ij}$, which is affected by the Effective Area $A_i$ and flux $F_j$ as follows. Because source fluxes have units of expected photons per second and per square centimeter, they are multiplied by instrument Effective Areas and a known factor, denoted by $T_{ij}$, to obtain expected photon counts:
\begin{equation}
\label{eqn:multiplicative}
C_{ij} = T_{ij} A_i F_j, \quad 1\leq i\leq N,\quad 1\leq j\leq M.
\end{equation}
The multiplicative constant $T_{ij}$ contains the exposure time, as well as other factors that can be calculated approximately by astrophysicists such as corrections for enclosed energy fractions and spectral shape correction factors; see~\citet{herman2017} for details.  With this in mind, we can regard $T_{ij}$ as a fixed known constant, and any real uncertainties related to $T_{ij}$ can be partially captured by our residual modeling discussed later in this paper and in subsequent work.  Intuitively, the $A_i$ can be regarded as a measure of the efficiency of instrument $i$ in terms of photon collection. Fundamentally, (\ref{eqn:multiplicative}) presumes that the Effective Area for a particular instrument remains the same regardless of which source it is applied to (and vice versa). By using a more homogeneous subgroup of sources (see Section~\ref{subsec:sampleselection}), we can increase the applicability of (\ref{eqn:multiplicative}), as we illustrate in Section~\ref{section:e0102data}. Of course, there is no free lunch -- by using a subgroup instead of all $M$ of the sources, we have less data and hence higher variability of our estimator, a bias-variance trade-off. 

Prior to observing $\{c_{ij}\}$, astronomers obtain initial estimates $a_i$ of $A_i$ from ground-based or in-flight calibration measurements, and hence it is safe to assume these measurements are independent of $\{c_{ij}\}$.
Comparing with estimated
fluxes of well-understood sources, astronomers can also place a reasonable prior bound on the margin of relative error in $a_i$   at about $20\%$~\citep{lee2011accounting,drakea2006monte}. Additional prior knowledge on the measurement errors in $\{c_{ij}\}$ is available. How to utilize this prior information, and whether these estimated uncertainties suffice to explain the variations in the data,  are among the questions that we investigate in this paper.

\subsection{Sample Selection Mechanism}
\label{subsec:sampleselection}

The sample selection mechanism, which involves both the selection of sources and the instrument used to observe each source is important because a biased selection mechanism can lead to misleading results. However, this is not a large concern in our setting for several reasons. 

First, the instruments we consider share a broad common energy passband so an object observed with one instrument will likely be seen with another. Indeed, for all the datasets we analyze in this paper, each source is observed by each instrument, although this is not a requirement for our methods. Second, the chance of a source not being observed because it is too faint is low. Since dim sources are not used for calibration, we do not include them in the study. Instead, we include sources with well-understood energy spectra, high intrinsic intensity, and stable spectral-temporal variations. This selection of sources is favorable since our ultimate goal is to calibrate the instruments and each of their Effective Areas is invariant to source fluxes. Furthermore, the vagaries of scheduling introduces large variations in the completeness achievable in a fleet of spacecraft, but this selection bias is negligible. Each spacecraft has several independent intrinsic constraints that are related to the shape of its orbit~\citep{chandrachap3}; its Sun, Moon, and Earth avoidance angles; and even its thermal environment histories; making scheduling of simultaneous observations difficult~\citep{chandrachap2}.  In other words, the missingness in the observation matrix, if any, is due to factors that are irrelevant to the intrinsic property of the sources or instruments, i.e., the estimands. These considerations permit us to ignore the sample selection mechanism in the sense of \citet{rubin1976inference}. 

The remainder of this paper is organized into 4 sections. Section~\ref{section:calibrationmodel} describes a statistical model for calibration concordance, a log-Normal model, and extends it to a more general log-$t$ model to handle outliers. Using simulated and real data, Sections~\ref{section:simulateddataresults} and \ref{section:realdataresults} assess and verify the empirical performance of our methods. Section~\ref{sec:discussion} briefly discusses a likelihood approach and its connection to our Bayesian approach, and future work. All numerical results are reproducible using the \textit{Python} code and data available on {\tt GitHub} at \url{https://github.com/astrostat/Concordance}. 

\section{Building and Fitting the Proposed Concordance Models}
\label{section:calibrationmodel}

\subsection{Modeling Multiplicative Means}
\label{subsec:addtivemean}

To make distinctions between observed quantities (e.g., estimator) and unknown quantities (e.g., estimand) clear, we adopt the convention that the former is denoted by lowercase (Roman) letters and the latter by uppercase, whenever feasible. We express (\ref{eqn:multiplicative}) as
\begin{equation}\label{eq:log}
\log C_{ij} - \log T_{ij}= B_i + G_j, \text{ where } B_i = \log A_i \text{ and } G_j = \log F_j.
\end{equation}
While this is a trivial relationship among the \textit{estimands}, it does not hold for their corresponding \textit{estimators}. In fact, if we let $y_{ij}=\log c_{ij} - \log T_{ij}$ (and ignore the issue of $c_{ij}=0$ for the moment), $b_i=\log a_i$ and $g_j=\log f_j$, we cannot simultaneously expect that $y_{ij}=b_i+g_j+\epsilon_{ij}$ \textit{and} that $\epsilon_{ij}$ is independent of $\{b_i, g_j\}$ with mean zero. If both were true, it would imply (incorrectly) that the expectation of $y_{ij}$ is determined by $b_i$ and $g_j$, rather than by their respective estimands: $B_i$ and $G_j$. Table~\ref{table:notationsEstimandEstimate} gives a summary of the notation used in this section.

The quantity that we observe and aim to model is $y_{ij} = \log c_{ij} - \log T_{ij}$, assuming $c_{ij} > 0$. (The case when $c_{ij}=0$, which never occurs in our data, is discussed below.) We assume that the measurement error in $c_{ij}$ for $C_{ij}$ is multiplicative (i.e., in terms of a percentage), which results in additive errors on the log-scale. Thus we postulate the regression model
\begin{equation}
\label{eq:threey}
 y_{ij} =  -0.5 \ \sigma_{i}^2 + B_i+G_j +e_{ij}, \quad e_{ij} \stackrel{{\rm indep}}{\sim} \mathcal{N}(0, \sigma_i^2), 
\end{equation}
where $-0.5 \ \sigma_{i}^2$ is a half-variance correction for the multiplicative mean modeling in (\ref{eqn:multiplicative}). This correction ensures that $\E(c_{ij})=C_{ij}$ because if $\log x \sim N(\mu, v)$, then 
$E(x)=e^{0.5v+\mu}$. Consequently, $\E(c_{ij})= {T}_{ij}\E(e^{y_{ij}})={T}_{ij} e^{0.5\sigma_i^2 - 0.5\sigma^2_{i}}e^{B_i}e^{G_j}= C_{ij}.$
For convenience, when (and \textit{only when}) $\sigma_i^2$ is known, we treat $y'_{ij}=y_{ij}+0.5\sigma^2_{i} $ as data. Since $b_i = \log a_i$ is an initial estimate of $B_i$ that is available without access to the calibration data $\{c_{ij}\}$,  we formulate it as the prior mean for $B_i$ via $
{B}_i {\rm \stackrel{indep}{\sim}} \mathcal{N}({b}_i, \tau_i^2)$,
where $\tau_i$ is provided by astronomers as well.

\begin{table}[t]
\begin{tabular}{c|ccccc}
\hline
& Counts & Effective Area & Flux & Log-Data & Correction \\
\hline
Estimand (Parameter) & $C_{ij}$ & $A_i =\exp(B_i)$ & $F_j = \exp(G_j)$ &   & $-0.5\sigma_i^2$ \\ 
Estimate (Data) & $c_{ij}$ & $a_i = \exp(b_i)$ &   & $y_{ij}$ & $T_{ij}$ \\
Relationship & ${E}(c_{ij}) = C_{ij}$ & $B_i\sim\mathcal{N}(b_i,\tau_i^2)$ & $C_{ij} = A_i F_j$ & $y_{ij} = \log(\frac{c_{ij}}{T_{ij}})$ & \\
\hline
\end{tabular}
\label{table:notationsEstimandEstimate}
\caption{Summary of notations for log-Normal model. The index $i$ ranges from $1$ to $N$ and the index $j$ ranges from $1$ to $M$.}
\end{table}

Given the underlying Poisson nature of the photon counts, $c_{ij}$, the log-Normal model in \eqref{eq:threey} deserves some explanation. If the expected counts, $C_{ij}$, are reasonably large, the log-Normal model approximates the Poisson model well. This is what we expect in practice since calibration sources are typically relatively bright, as illustrated in our datasets in Section~\ref{section:realdataresults}. However, the primary reason we adopt this approximation is because 
the log-Normal model permits separate modeling considerations for the mean and variance of the (transformed) counts,  whereas the Poisson mean dictates the Poisson variance. This flexibility is especially important when the mean model~\eqref{eqn:multiplicative} is not perfectly specified, as we expect since the $T_{ij}$ are estimated or approximated in practice~\citep{herman2017}. The variance of the log-Normal model can (partially) capture imperfections in the mean model; see Section~\ref{section:simulateddataresults} for discussions in our numerical experiments. 

Because the  log-Normal model works with the log counts, it cannot directly accommodate zero counts; we observe zero counts in some of our simulations studies. Should an observed count of zero be observed we suggest it be replaced by a pseudo count of $c_{ij} = 0.5$. This is a standard strategy ~\citep[p. 42]{bilder2014analysis} known as a zero-modified Poisson;  the mean and variance of a zero-modified Poisson random variable approximate those of the corresponding Poisson well if the mean of the Poisson is reasonable large.\footnote{For a Poisson random variable $q$ with mean $\lambda$, this replacement leads to a zero-modified Poisson random variable $\tilde{q}$ with mean and variance
\begin{equation}\label{eq:modified}
\E(\tilde q)=\lambda + 0.5 e^{-\lambda}, \qquad 
\V(\tilde q)= \lambda(1-e^{-\lambda})+ 0.25e^{-\lambda}(1-e^{-\lambda}).
\end{equation}
With reasonably large $\lambda$, $\tilde q$ approximates $q$ extremely well. } We validate this strategy in our simulation studies.

In (\ref{eq:threey}), the variance for the measurement error is assumed to depend only on the instrument. This assumption works reasonably well in our applied examples. More generally, each $e_{ij}$ can have its own variance, $\sigma^2_{ij}$, but obviously some constraints are needed in order to ensure identifiability. Other possible constraints include forcing the variances to be source-dependent only or to be additive, i.e., $\sigma_{ij}^2=\omega_i^2+\lambda_j^2$.  We first consider a \textit{known variance model} because astronomers provide best guesses of $\sigma_{ij}^2$. However, as illustrated in subsequent sections, the \textit{unknown variance model} is more flexible, robust, and hence recommended in practice. This is because the inferred adjustment of Effective Areas could be either overly-optimistic or overly-conservative if the specified $\sigma_{ij}^2$ are inaccurate. Unfortunately, this is often the case in practice owing to an incomplete quantification of measurement uncertainties or incomplete understanding of data preprocessing.

\subsection{Log-Normal Hierarchical Model and Its Posterior Sampling}
\label{section:hierregressionmodel}

Embedding the log-Normal model (\ref{eq:threey}) into a Bayesian hierarchical model requires a prior for $G_j$. Because astronomers do not know enough about the physical processes to place an informative prior on the fluxes, they prefer to use a flat prior for the log-scale flux, i.e., $G_j$, on the grounds that astronomical source fluxes cover many orders of magnitude in dynamic range ~\citep{appenzeller2012introduction}. 
When the $\sigma_i^2$ are treated as unknown, we adopt independent Inverse-Gamma distributions with shape parameter $\alpha$ and scale parameter $\beta$, the values of which are chosen to reflect the astronomers' prior knowledge about the approximate scale of noise levels. Specifically, we assume 
\begin{eqnarray}
\label{eqn:lognormalmodel}
&y_{ij}& | \boldsymbol{\ B,\ G,\ \sigma^2} \  {\rm\stackrel{indep}{\sim}} \ \mathcal{N}\left(-0.5\sigma_i^2 +B_i+G_j,\ \sigma_i^2\right),\\ 
\notag
&\sigma_i^2&{\rm\stackrel{indep}{\sim}} \ \text{Inv-Gamma}(\alpha,\ \beta),\quad
{B_i}  {\rm \stackrel{indep}{\sim}} N(b_i,\ \tau_i^2),\quad \text{ and } \quad G_j  {\rm\stackrel{indep}{\sim}} \ {\rm flat\ prior},
\end{eqnarray}
where $\boldsymbol{B} = (B_1,\ldots , B_N)^{\top}$, $\boldsymbol{G} = (G_1,\ldots, G_M)^{\top}$, $\boldsymbol{\sigma}^2 = (\sigma_1^2,\ldots,\sigma_N^2)^{\top}$, and $\boldsymbol{\tau}^2 = (\tau_1^2,\ldots,\tau_N^2)^{\top}$, and ${\top}$ denotes the usual transpose.  Under (\ref{eqn:lognormalmodel}), we can show that (see Appendix~\ref{section:properposterior}) the posterior distribution is proper with the weakest condition possible: each source is observed by at least one instrument. This theoretical guarantee is especially important because the number of parameters, $2N+M$, varies with the number of observations, $NM$.  Furthermore, the MAP (maximum-a-posterior) estimator of each $\sigma_i^2$ is bounded away from zero by a constant which depends only on the hyperparameters and the total number of sources (see Section~\ref{subsec:power}). Last,  the use of proper conjugate priors for $\boldsymbol{\sigma}^2$ avoids the problem of an unbounded posterior distribution, which can occur when we use uniform prior distributions for $\boldsymbol{\sigma}^2$. We also remark that, because each $\sigma_i^2$ enters both the variance and the mean in (\ref{eqn:lognormalmodel}), the impact of the choice of prior on the posterior inference is nuanced, as we discuss in the context of astrophysical applications (see Section~\ref{section:e0102data}).

In general, we let $J_i$ be the set of indexes of the objects observed by detector $i$ and $I_j$ be the set of indexes of the instruments that observe object $j$, and hence they accommodate missing data. Under \eqref{eqn:lognormalmodel},  the posterior density of $\{\boldsymbol{B, G, \sigma^2} \}$, if it exists, is proportional to 
\begin{equation}\label{eqn:joint}
\left[\prod_{i=1}^N \sigma_{i}^{-|J_i|-2-2 \alpha}\right] \exp\left\{- \frac{1}{2}\sum_{i=1}^N\sum_{j\in J_i} \frac{(y_{ij}+0.5\sigma^2_i-B_i-G_j)^2}{\sigma^{2}_{i}} -\sum_{i=1}^N \left[\frac{(b_{i}-B_i)^2}
{2\tau^2_{i}} +\frac{\beta}{\sigma_i^2}\right]\right\}.
\end{equation}
This implies that the conditional distribution of the column vector $\boldsymbol{\theta}\equiv(\boldsymbol{B}^\top, \boldsymbol{G}^\top)^\top$ given $\boldsymbol{\sigma^2}$ is 
an $(N+M)$-dimensional Normal density. A simple way of deriving the mean $\boldsymbol{\mu(\sigma^2)}$ and covariance $\boldsymbol{\Sigma(\sigma^2)}$ of this conditional distribution is to use partial derivatives of $L(\boldsymbol{\theta, \sigma^2})$, the logarithm of the joint posterior density given in (\ref{eqn:joint}). Let 
\begin{equation}\label{eq:normal}
\boldsymbol{\gamma(\sigma^2)}= \frac{\partial L(\boldsymbol{\theta, \sigma^2})}{\partial \boldsymbol\theta}\Big\vert_{\boldsymbol\theta= \boldsymbol 0}\quad {\rm and}\quad \boldsymbol{\Omega(\sigma^2)}= - \frac{\partial^2 L(\boldsymbol{\theta, \sigma^2})}{\partial \boldsymbol\theta^2} \Big\vert_{\boldsymbol\theta=\boldsymbol 0}.
\end{equation}
Note the second ``$|_{\boldsymbol\theta= \boldsymbol 0}$" is cosmetic because for Normal model the Fisher information is free of $\boldsymbol\theta$. 

By the form of the Normal density, $\boldsymbol{\mu(\sigma^2)=\Omega^{-1}(\sigma^2)\gamma(\sigma^2)}$ and $\boldsymbol{\Sigma(\sigma^2)=\Omega^{-1}(\sigma^2)}$. Evaluating these derivatives yields $\boldsymbol\gamma=(\gamma_1, \ldots, \gamma_N, \gamma_{N+1}, \ldots, \gamma_{N+M})^\top$  and $\boldsymbol\Omega$ as functions of $\boldsymbol{\sigma^2} $:
\begin{equation}\label{eq:gamma}
\gamma_i= \frac{\sum_{j\in J_i}y_{ij}}{\sigma^{2}_{i}}+\frac{b_i}{\tau^{2}_i}+0.5|J_i|,\ i=1,\ldots N, \quad \quad \gamma_{j+N}= \sum_{i\in I_j}\frac{y_{ij}}{\sigma^{2}_{i}}+0.5|I_j|, \ j=1, \ldots, M,
\end{equation}
\begin{equation}
\boldsymbol{\Omega} (\boldsymbol{\sigma}^2)=\left(\begin{array}{cc}
\bD_{N} & \bD \boldsymbol{R} \\ \boldsymbol{R}^\top \bD & \bD_{M} \end{array}\right), \quad {\rm where} \quad
\begin{array}{cc}
\bD_{N} =&{\rm Diag}\{|J_i|\sigma^{-2}_i+\tau^{-2}_i,\ i=1, \ldots, N \}, \\
\bD_{M} =&{\rm Diag}\{\sum_{i\in I_j}\sigma^{-2}_{i},\ j=1, \ldots, M\}, \end{array}\label{eqn:expressionomegasigma}
\end{equation}
$\bD={\rm Diag}\{\sigma^{-2}_i, \ i=1, \ldots, N\}$, and $\boldsymbol{R}=\{r_{ij}\}$, with $\boldsymbol{R}$ the $N\times M$ data ``recoding matrix'', i.e., $r_{ij}=1$ if source $j$ is observed with instrument $i$, and $r_{ij}=0$ otherwise.  When all the instruments measure all the sources, as in all our applications,  $\boldsymbol{R}$ and $\bD\boldsymbol{R}$ are rank-one. In such cases, the inverse of $\boldsymbol{\Omega}(\boldsymbol{\sigma}^2)$ can be calculated analytically, as seen in Appendix~\ref{appendix:derivationofomegainverse}. 

When $\boldsymbol{\sigma^2}$ is unknown, its marginal posterior density can be obtained by evaluating the identity $P(\boldsymbol\sigma^2)=P(\boldsymbol{\theta, \sigma^2})/P(\boldsymbol{\theta|\sigma^2})$ at $\boldsymbol\theta=\boldsymbol{0}$, where the numerator is given in~\eqref{eqn:joint} and the denominator is given by the conditional Normal distribution with mean and variance given in~\eqref{eq:normal}. For ease of notation, we use $P(\cdot)$ as simplified notation for the posterior density $P(\cdot|\left\{y_{ij}\right\})$. In particular, noting that 
$P(\boldsymbol{\theta}=\boldsymbol{0}|\boldsymbol{\sigma}^2)\propto \sqrt{|\boldsymbol{\Omega(\sigma^2)}|}\ e^{-\boldsymbol{\mu^\top(\sigma^2)\Omega(\sigma^2)\mu(\sigma^2)}/2}$,  $P(\boldsymbol\sigma^2) $ is proportional to
\begin{equation}\label{eqn:margin}
\prod_{i=1}^N \sigma_{i}^{-|J_i|-2-2 \alpha}\frac{1}{\sqrt{|\boldsymbol{\Omega(\sigma^2)}|}}\exp\left\{\frac{1}{2}\boldsymbol{\mu^\top(\sigma^2)\Omega(\sigma^2)\mu(\sigma^2)}- \sum_{i=1}^N \left[\frac{\sum_{j\in J_i} y_{ij}^2+2\beta}{2\sigma^{2}_{i}} +\frac{|J_i|\sigma^2_i}{8}\right]\right\}.
\end{equation}
Because this is not a standard distribution, numerical methods are required. We can obtain a Monte Carlo sample from the joint posterior in one of several ways, including applying an MCMC algorithm to sample $\{\boldsymbol{B},\boldsymbol{G},\boldsymbol{\sigma}^2\}$ jointly, or sampling $\boldsymbol{\sigma}^2$ from (\ref{eqn:margin}) via rejection sampling and $\{\boldsymbol{B},\boldsymbol{G}\}$ from its conditional Normal distribution given $\boldsymbol\sigma^2$. Incidentally, a good rejection proposal density is a convenient independent-component inverse Gamma distribution found in the proof of the posterior propriety; see Appendix \ref{section:properposterior}. The latter strategy is very efficient, especially as it provides independent draws. However, it is less flexible when we extend the model (e.g., the log-$t$ extension of Section~\ref{sec:logtmodel}).  Consequently, we adopt the more flexible MCMC approach.

Since the dimension of the parameter space, $2N+M$, is typically large for calibration purposes and the parameters are highly correlated, we use a Hamiltonian Monte Carlo (HMC) algorithm~\citep{HMC}, which delivers a less correlated sample than do other MCMC techniques~\citep[e.g.][]{MH1, MH2, Geman:1984}. We implement HMC using the \texttt{STAN} package in \textit{Python}~\citep{NUTS,RSTAN,pystanpkg}, along with a blocked Gibbs sampler as an independent cross-check of \texttt{STAN}. In the blocked Gibbs sampling, we sample $\{\boldsymbol{B, G} \}$ jointly to improve mixing as opposed to one-at-a-time Gibbs sampling, thanks to the joint normality of $\{\boldsymbol{B, G} \}$ conditioning on $\boldsymbol{\sigma^2}$; see Appendix~\ref{appendix:bayescomputation} for details. 

As is well known, computational efficiency for posterior sampling often is affected by modeling defects, such as near non-identifiability \citep[e.g.,][]{meng2018role}. Model (\ref{eqn:lognormalmodel}) does not suffer from this as long as $\tau_i$ is not too large compared to the magnitude of $\sigma_i$; see Appendix~\ref{appendix:identifiability}.

\subsection{Building Intuition: Power Shrinkage and Variance Shrinkage}\label{subsec:power}
To communicate our model aims clearly, when $\boldsymbol{\sigma}^2$ is known,  we express the MAP estimators of $\boldsymbol\theta$ in terms of the usual linear shrinkage estimators~\citep{shrinkage1,shrinkage2} of $\boldsymbol B$ and $\boldsymbol G$. Intuitively, shrinkage estimators combine information from all the instruments and sources as well as experts' prior information through weighted averages, which serve the purpose of calibration concordance across instruments and sources well. Specifically, by setting the derivative of the log posterior in (\ref{eqn:joint}) with respect to $\boldsymbol{B}$ and $\boldsymbol{G}$ to be zero, we find that the MAP estimators, conditional on $(\boldsymbol{\tau^2},\boldsymbol{\sigma}^2)$ and denoted by $\widehat{B}_i=\widehat{B}_i(\boldsymbol{\tau^2},\boldsymbol{\sigma}^2)$ and $\widehat{G}_j=\widehat{G}_j(\boldsymbol{\tau^2},\boldsymbol{\sigma}^2)$, must satisfy
\begin{equation}
\label{eqn:shrinkBG}
\widehat{B}_i (\boldsymbol{\tau^2},\boldsymbol{\sigma}^2) = W_i (\bar{y}_{i\cdot}' - \bar{G}_i)+(1-W_i) b_i \quad \text{ and } \quad  \widehat{G}_j (\boldsymbol{\tau^2},\boldsymbol{\sigma}^2) = \bar{y}_{\cdot j}' - \bar{B}_j,
\end{equation}
where for notational simplicity we write $y'_{ij}=y_{ij}+0.5\sigma^2_i$, suppressing the dependence on $\sigma^2_i$; $\bar y_{i\cdot}'$ is the precision (i.e., the reciprocal  variance) weighted average of the $y_{ij}'$ over $j\in J_i$, and $\bar y_{\cdot j}'$ is the precision weighted average of the $y_{ij}'$ over $i\in I_j$, i.e., $\bar y_{i\cdot}'=\frac{\sum_{j\in J_i}y_{ij}'\sigma^{-2}_{i}}{\sum_{j\in J_i}\sigma^{-2}_{i}}$ and $\bar y_{\cdot j}'=\frac{\sum_{i\in I_j}y_{ij}'\sigma^{-2}_{i}}{\sum_{i\in I_j}\sigma^{-2}_{i}}.$

In~\eqref{eqn:shrinkBG}, $\bar G_{i}$ is the precision weighted average of the $\widehat G_{j}(\boldsymbol{\tau^2},\boldsymbol{\sigma}^2)$ over $j\in J_i$ and $\bar{B}_j$ is the precision weighted average of the $\widehat{B}_i(\boldsymbol{\tau^2},\boldsymbol{\sigma}^2)$ over $j\in J_i$, i.e., 
\begin{equation*}
\bar G_{i}=\frac{\sum_{j\in J_i}\widehat G_{j}(\boldsymbol{\tau^2},\boldsymbol{\sigma}^2)\sigma^{-2}_{i}}{\sum_{j\in J_i}\sigma^{-2}_{i}}\quad \text{and} \quad \bar B_{j}=\frac{\sum_{i\in I_j}\widehat B_{i}(\boldsymbol{\tau^2},\boldsymbol{\sigma}^2) \sigma^{-2}_{i}}{\sum_{i\in I_j}\sigma^{-2}_{i}}.
\end{equation*}
Note that the expressions for $\widehat{B}_i(\boldsymbol{\tau^2},\boldsymbol{\sigma}^2)$ and $\widehat{G}_j(\boldsymbol{\tau^2},\boldsymbol{\sigma}^2)$ involve $\bar{G}_i$ and $\bar{B}_j$, which are linear combinations of $\widehat{B}_i(\boldsymbol{\tau^2},\boldsymbol{\sigma}^2)$ and $\widehat{G}_j(\boldsymbol{\tau^2},\boldsymbol{\sigma}^2)$. Therefore, when the variance parameters $\boldsymbol\sigma^2$ and $\boldsymbol\tau^2$ are known, \eqref{eqn:shrinkBG} and the expressions for $\bar{G}_i$ and $\bar{B}_j$ form a system of linear equations, the solutions of which are the MAP estimators for the $B_i$ and $G_j$. Finally, the weights 
\begin{equation}\label{eq:weight}
W_i=\frac{|J_i|\sigma^{-2}_{i}}{\tau^{-2}_i+|J_i|\sigma^{-2}_{i}}
\end{equation}
serve as the shrinkage factor for estimating $B_i$. The form of~\eqref{eq:weight} is intuitive because it measures the relative precision provided by the likelihood with respect to the total posterior precision.  Hence $1- W_i$ is the proportion of information from the prior distribution. This metric permits us to make judicious choices of the prior variances, $\boldsymbol{\tau}^2$, when they are not given by experts, so that our results are not unduly prior-driven. See Sections~\ref{section:e0102data},~\ref{section:2xmmhmsdata} and~\ref{section:hmsdata} for in-context discussions.

The linear shrinkage corresponds to 
``power shrinkage" on the original scale. Consider the case where $\boldsymbol{\tau^2}$ and all $G_j = g_j$ are known. In this case, (\ref{eqn:shrinkBG}) yields $\widehat B_i (\boldsymbol{\tau^2},\boldsymbol{\sigma}^2) = W_i(\bar y_{i \cdot}'-\bar g_{i})+(1-W_i)b_i.$ Consequently, the Effective Area is estimated by $$\widehat A_i =\widehat A_i (\boldsymbol{\tau^2},\boldsymbol{\sigma}^2)=\exp[\widehat B_i(\boldsymbol{\tau^2},\boldsymbol{\sigma}^2)] = a_i^{1-W_{i}} \left[(\tilde c_{i \cdot}\tilde f_i^{-1})e^{\sigma_{i}^2/2}\right]^{W_i},$$ 
where $\tilde c_{i\cdot}$ and $\tilde f_i$ are the geometric means: $\tilde c_{i \cdot}=\left[\prod_{j\in J_i} c_{ij}\right]^{1/|J_i|} \hbox{ and } \tilde f_i=\left[\prod_{j\in J_i} f_{j}\right]^{1/|J_i|}$. This adjustment depends on the relative precision $1-W_i$ for the $b_i$. If $W_i=1$, that is, if $b_i$ is not trustworthy at all, we ignore $a_i$ and estimate $A_i$ by $\widehat A_i=
\left[\tilde c_{i \cdot}\tilde f_i^{-1}\right] e^{\sigma_i^{2}/2}$. Note that the bias correction $e^{\sigma^2_i/2}$ is needed because otherwise $\tilde c_{i \cdot}\tilde f_i^{-1}$ converges to $A_i e^{-\sigma^2_i/2}$ as $|J_i|\rightarrow \infty$.  In contrast, if $W_i=0$, i.e., $b_i$ possesses no error,  then we ignore any data and just use $\hat A_i=a_i$ to estimate $A_i$. 

Because $W_i$ grows with $|J_i| \sigma^{-2}_i$, for fixed $\sigma^2_i$, the more calibration data we have, the larger the adjustment we make. However, the precision is not determined by the \textit{data size} $|J_i|$ alone, but also by the \textit{quality} of the data, as reflected in $\sigma^2_i$.  Hence if both $|J_i|$ and $\sigma^2_i$ are large, $W_i$ may not be near $1$ because the indirect information $|J_i| \sigma^{-2}_i$ may not be large compared to $\tau^{-2}_i$.   

When $\boldsymbol{\sigma}^2$ is unknown, we use the conjugate prior distributions for $\boldsymbol{\sigma}^2$ as in Section~\ref{section:hierregressionmodel}. Taking the derivative of the log of (\ref{eqn:joint}) with respect to $\sigma^2_i$ reveals that the MAP estimators also satisfy 
\begin{equation}\label{eqn:ssigma}
\widehat\sigma^2_i= 2\left[\sqrt{1+S^2_{y,i}}-1\right], \quad S_{y,i}^2= \frac{1}{|J_i|+ \alpha}\left[\sum_{j\in J_i}(y_{ij}-\widehat B_i-\widehat G_j)^2 + \beta\right],
\end{equation}
where $\beta$ is the shape parameter for the inverse Gamma prior distribution for $\sigma_i^2$ as given in (\ref{eqn:lognormalmodel}). 
We then solve (\ref{eqn:shrinkBG}) and (\ref{eqn:ssigma}) to obtain the MAP estimators $\{\boldsymbol{\widehat B, \widehat G, \widehat \sigma^2}\}$. For finite $|J_i|$, because $S^2_{y,i}\ge \beta/(|J_i|+\alpha)\ge \beta/(M+\alpha)$,  all $\widehat \sigma_i^2$ are bounded below by $2\sqrt{1+\beta/(M+\alpha)} - 2>0$. {Hence our model, including its prior specifications, avoids the problem of an unbounded likelihood at $\sigma_i^2=0$, which is a known problem of hierarchical modeling with weak likelihood or prior information. }

Intriguingly, the MAP estimator for the variance is also a shrinkage estimator,
\begin{equation*}
\widehat\sigma^2_i=2\left[\sqrt{1+S^2_{y,i}}-1\right] = \frac{2}{1+\sqrt{1+S^2_{y,i}}}\ S_{y, i}^2 \equiv  R_i S^2_{y,i},
\end{equation*}
where $S_{y, i}^2$ of (\ref{eqn:ssigma}) is a natural extension of residual variance estimator, incorporating  prior information through $\{\alpha, \beta\}$. The half-variance correction leads to a shrinkage of $S^2_{y,i}$ because $R_{i}\le 1$. The degree of shrinkage depends on $S^2_{y,i}$ itself: the larger $S^2_{y, i}$ is, the more shrinkage. Such a self-weighted non-linear shrinkage phenomenon appears to be new.

\subsection{Extensions to Handling Outliers: Log-$t$ Model}
\label{sec:logtmodel}

Outliers are not uncommon in astronomical observations because 
the harsh environments in which the detectors operate can be subject to large variations in background intensities, potentially
leading to large errors in flux estimates.  In addition, astronomical sources have intrinsic variabilities covering many orders of magnitude, and some measurements could be performed in regimes where the detectors do not respond linearly to the incoming signal. For these reasons, we propose a robust log-$t$ model as a generalization of the log-Normal model to better handle outliers (see, e.g.~\citet{lange1989robust}). Specifically, we introduce a latent variable $\xi_{ij}$ for each observation $y_{ij}$ that is used to down-weight outliers. Formally, for each observation $y_{ij}$, we assume
\begin{eqnarray}
\label{eqn:logtmodel}
y_{ij}\mid \boldsymbol{\ B,\ G,\  \xi}  &= & -\frac{\kappa^2}{2\xi_{ij}} +B_i+G_j + \frac{Z_{ij}}{\sqrt{\xi_{ij}}},\\ 
\notag
Z_{ij}|\boldsymbol{\xi} & \rm{\stackrel{indep}{\sim}} & \mathcal{N}(0, \kappa^2),\quad \text{and} \quad  B_i {\rm\stackrel{indep}{\sim}}  \mathcal{N}(b_i, \tau_i^2),
\end{eqnarray}
where $\boldsymbol{\xi}=\{\xi_{ij}\}$. Because $\E(e^{y_{ij}}\mid \boldsymbol{B,  G}) = \E\left[ \E (e^{y_{ij}}\mid \boldsymbol{B,\ G,\  \xi}) | \boldsymbol{B, G} \right] = A_iF_j$, the multiplicative model in (\ref{eqn:multiplicative}) is maintained. Depending on the assumptions made for $\xi_{ij}$, (\ref{eqn:logtmodel}) includes: 
\begin{itemize}[label={}]
\item \textit{Case 1}: \textit{log-Normal model with known variances}. If the $\xi_{ij}$ are known constants, the noise terms $e_{ij} = Z_{ij} /\sqrt{\xi_{ij}}$ are independent Normals with mean $0$ and variance $\sigma_{ij}^2={\kappa^2}/{\xi_{ij}}$. Thus the model in (\ref{eqn:logtmodel}) is equivalent to \eqref{eqn:lognormalmodel} with known variances.
\item \textit{Case 2}: \textit{log-Normal model with unknown variances}. If $\xi_{ij}=\xi_i  \stackrel{\rm indep}{\sim}\chi_{u}^2$ for all $j$, then the variances of $Z_{ij} /\sqrt{\xi_i}$ conditional on $\xi_i$ are ${\kappa^2}/{\xi_i}$, which are distributed as independent scaled inverse $\chi^2$ with degree of freedom $u$ and scale $u\kappa^2$. Thus (\ref{eqn:logtmodel}) is equivalent to \eqref{eqn:lognormalmodel} with $\alpha = u/2$ and $\beta = {\kappa^2}/{2}$, noting the equivalence between $\chi^2$ and Gamma distributions.
\item \textit{Case 3}: \textit{log-$t$ model}. If $\xi_{ij} \rm{\stackrel{indep}{\sim}} \chi^2_{\nu}$, i.e., mutually independent $\chi^2$ random variables, which are also independent of the $Z_{ij}$, then the error terms $Z_{ij} / \sqrt{\xi_{ij}}$ follows independent (scaled) student-$t$ distributions: ${Z_{ij}}/{\sqrt{\xi_{ij}}}\ \stackrel{\rm indep}{\sim}\  \left({\kappa}/{\sqrt{\nu}}\right)\ t_{\nu}.$ 
\end{itemize}
All of these models can be fit using HMC via \texttt{STAN} in the package \textit{CalConcordance}.

Besides down-weighting outliers, the latent $\boldsymbol\xi$ also permits a unique variance $\kappa^2/\xi_{ij}$ for each instrument-source combination. The log-$t$ model is thus more flexible, but computationally more demanding, than the log-Normal model: convergence of HMC is harder to achieve and the sampling is more costly. When the $\xi_{ij}$ are small, the half variance corrections ${\kappa^2}/{2\xi_{ij}}$ are also likely to dominate the error terms ${Z_{ij}}/{\sqrt{\xi_{ij}}}$, because of their (much) smaller denominator. This results in  small $y_{ij}$,  and hence the model tends to generate heavier left tails than right tails.

Despite these challenges, which are topics for further study, we demonstrate the effectiveness of log-$t$ compared with log-Normal model for simulated and real data in the presence of serious outliers. Among heavy-tailed distributions, we choose log-$t$ because it is a natural extension of the log-Normal model and it carries the intuitive interpretation of ``down-weighting'' outliers: $\xi_{ij}$ serves as the ``weight'' for $y_{ij}$. The last point is confirmed in Simulation~\upperRomannumeral{3} in Section~\ref{subsubsec:simulationsfortmodel}: outliers have much smaller estimated $\xi_{ij}$ relative to other observations. Without outliers, however, we recommend the log-Normal model for its adequacy and computational simplicity.

\section{Testing the Concordance Models with Simulated Data}
\label{section:simulateddataresults}

Our simulation studies aim to demonstrate that 
(1) the log-Normal model is reasonably robust to the type of model specifications likely to occur in practice; (2) a commonly adopted plug-in method treating guesstimated variances as known can lead to very 
poor adjustments; and (3) the log-$t$ model is preferred in the presence of serious outliers. We choose the simulation sample sizes to be on the same order as those in our applied examples to make the results more interpretable.

\subsection{Checking Robustness to Likely Misspecification}
\label{section:robustnesspoisson}

As seen in Section~\ref{section:calibrationmodel}, we approximate the Poisson counts via a log-Normal distribution and treat the $T_{ij}$ as known quantities in (\ref{eqn:multiplicative}). These assumptions have reasonable justifications \citep{herman2017}, but nevertheless we should exercise due diligence. Here we study the adequacy of the log-Normal approximation. Further simulations (\upperRomannumeral{4}-\upperRomannumeral{6}) are presented in Appendix~\ref{appendix:robustness_simulation} to investigate the effect of treating $T_{ij}$ fixed and how it interacts with the log-Normal approximation.

In Simulations~\upperRomannumeral{1} and~\upperRomannumeral{2}, there are $N=10$ instruments and $M=40$ sources. The data are generated as $y_{ij}=\log \tilde c_{ij}$, where $\{\tilde c_{ij}\}$ are independent zero-modified Poisson counts with $\lambda_{ij}=A_iF_j=\exp(B_i+G_j)$. We set $\{B_i=1, G_j=1\}$ in Simulation~\upperRomannumeral{1} and $\{B_i=5, G_j=3\}$ in Simulation~\upperRomannumeral{2}. We independently sample $b_i = \log a_i$ from $\mathcal{N}(B_i, 0.05^2)$. Thus Simulation~\upperRomannumeral{1} represents a low count scenario where the log-Normal approximation may not be appropriate. When $\boldsymbol\sigma^2 = 0.1^2 \boldsymbol{1}_N$, where $\boldsymbol{1}_N$ denotes an $N\times 1$ column vector of $1$s,  we use the posterior for $B_i$ and $G_j$ given in Section~\ref{section:hierregressionmodel}. Otherwise, we specify the prior of each $\sigma_i^2$ as independent inverse Gammas with degree of freedom $\alpha=2$ and scale $\beta=0.01$,  and use HMC to obtain draws from the joint posterior distribution. Our choice of hyperparameters are set to match astronomers' prior knowledge; for example,  $0.1^2$ is the maximum of their guesstimates of $\boldsymbol{\sigma}^2$,  reflecting the general consensus  that $10\%$ or less relative  error (recall $y_{ij}$ is on log scale) does not alter physical interpretations in important ways. As expected, the fitted values of $B_i$ and $G_j$ are much closer to their targets in Simulation~\upperRomannumeral{2} since it has more Poisson counts resulting from larger values of $B_i$ and $G_j$; Figures~\ref{fig:simpoissonN10M10B1G1} and~\ref{fig:simpoissonN10M10B5G3} give detailed results (for $B_i,
\sigma_i$) under Simulations~\upperRomannumeral{1} and~\upperRomannumeral{2}. 

\begin{figure}[t]
\centering
\includegraphics[width=0.9\textwidth]{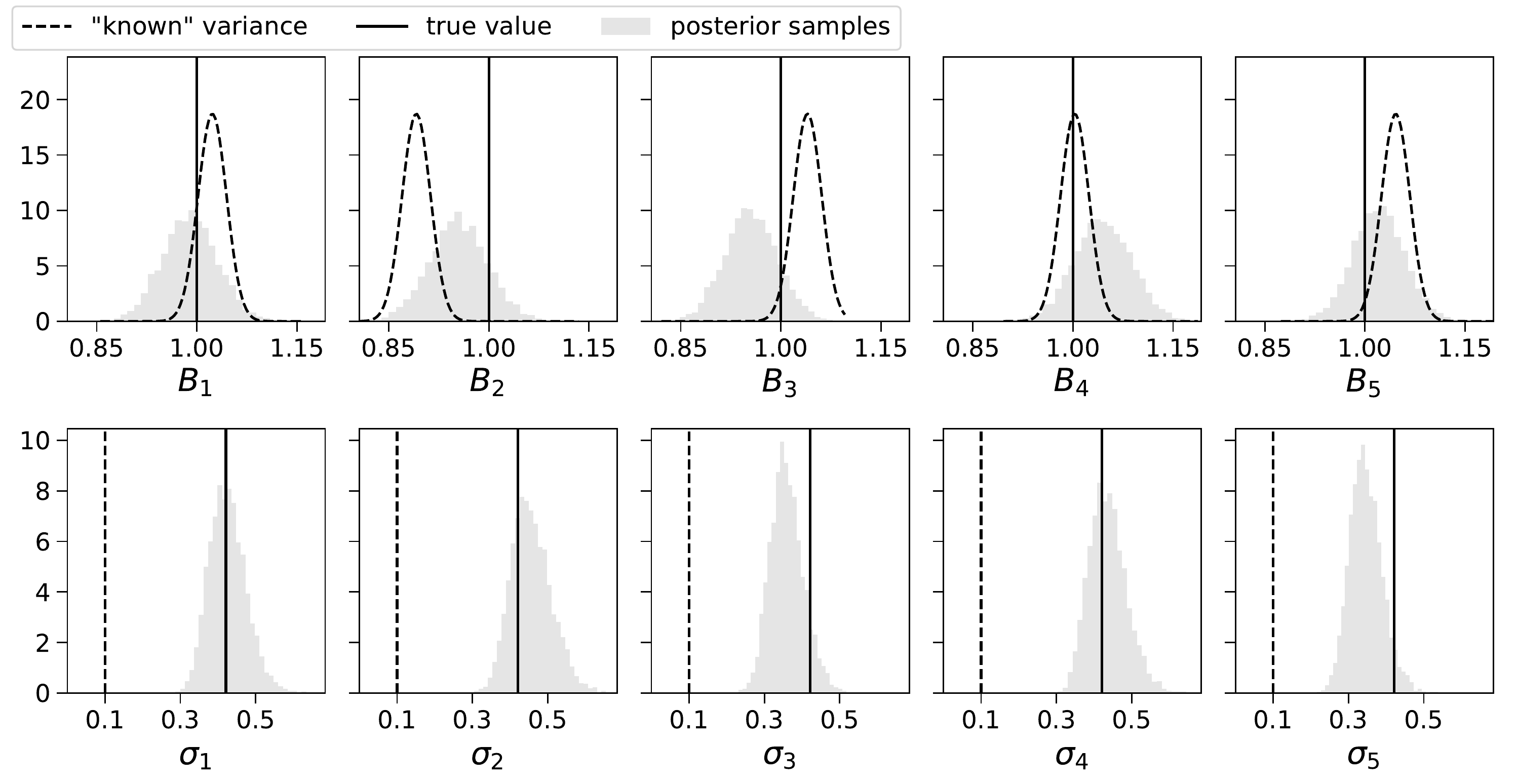}
\caption{Simulation~\upperRomannumeral{1}. Posterior histograms of $\{B_i,\sigma_i\}_{i=1}^5$. The solid vertical black lines denote the true/theoretical values of $B_i=1$ (top row)  and of $\sigma_i=0.421$ (second row). The dashed vertical lines denote $\sigma_i=0.1$ (second row). The black dashed density curves denote the exact posterior densities of $B_i$ when we set the variances equal to their guesstimated value $\sigma_i^2=0.1^2$.}
\label{fig:simpoissonN10M10B1G1}
\end{figure}

\begin{figure}[tbph]
\centering
\includegraphics[width=0.8\textwidth]{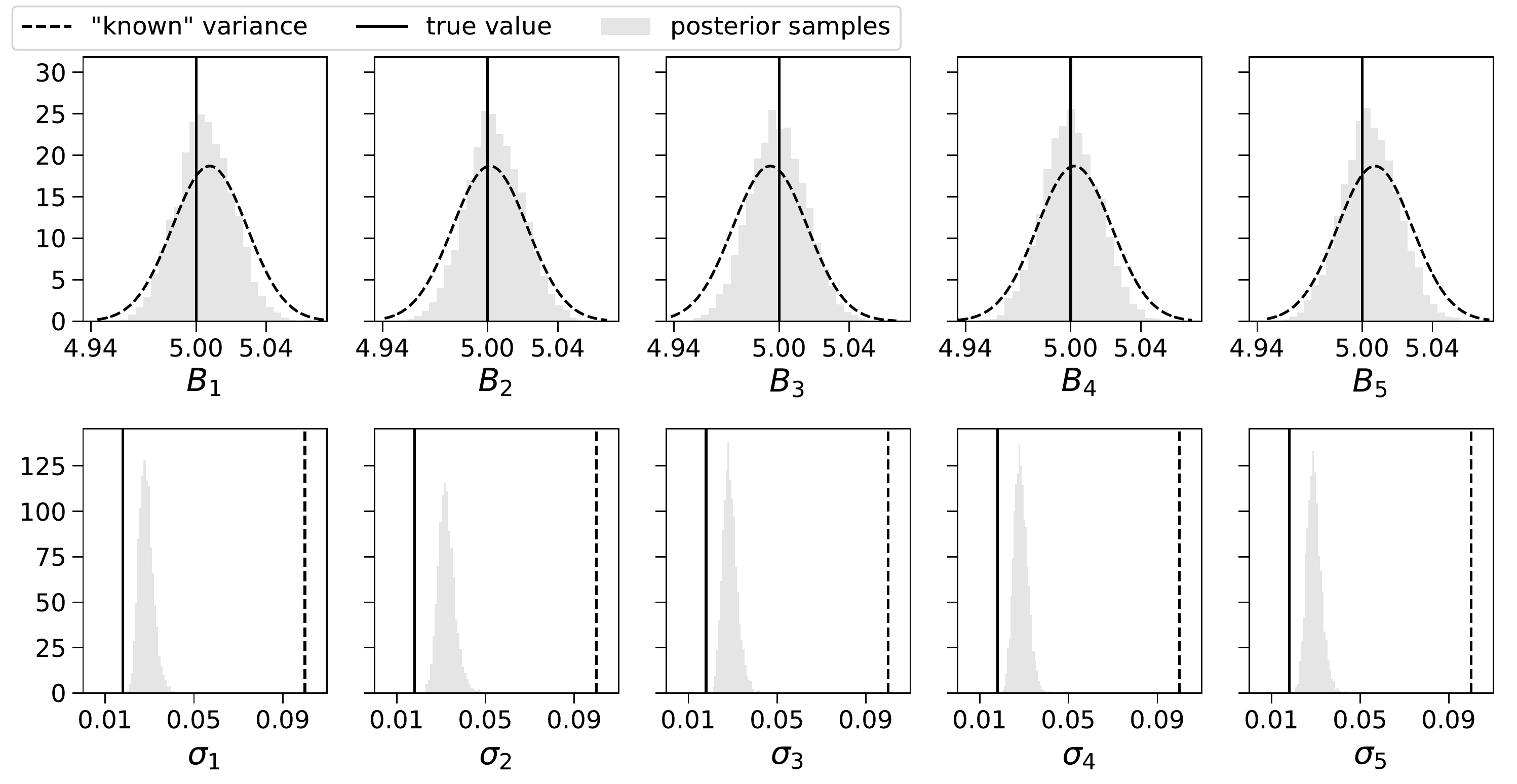}\\
\includegraphics[width=0.8\textwidth]{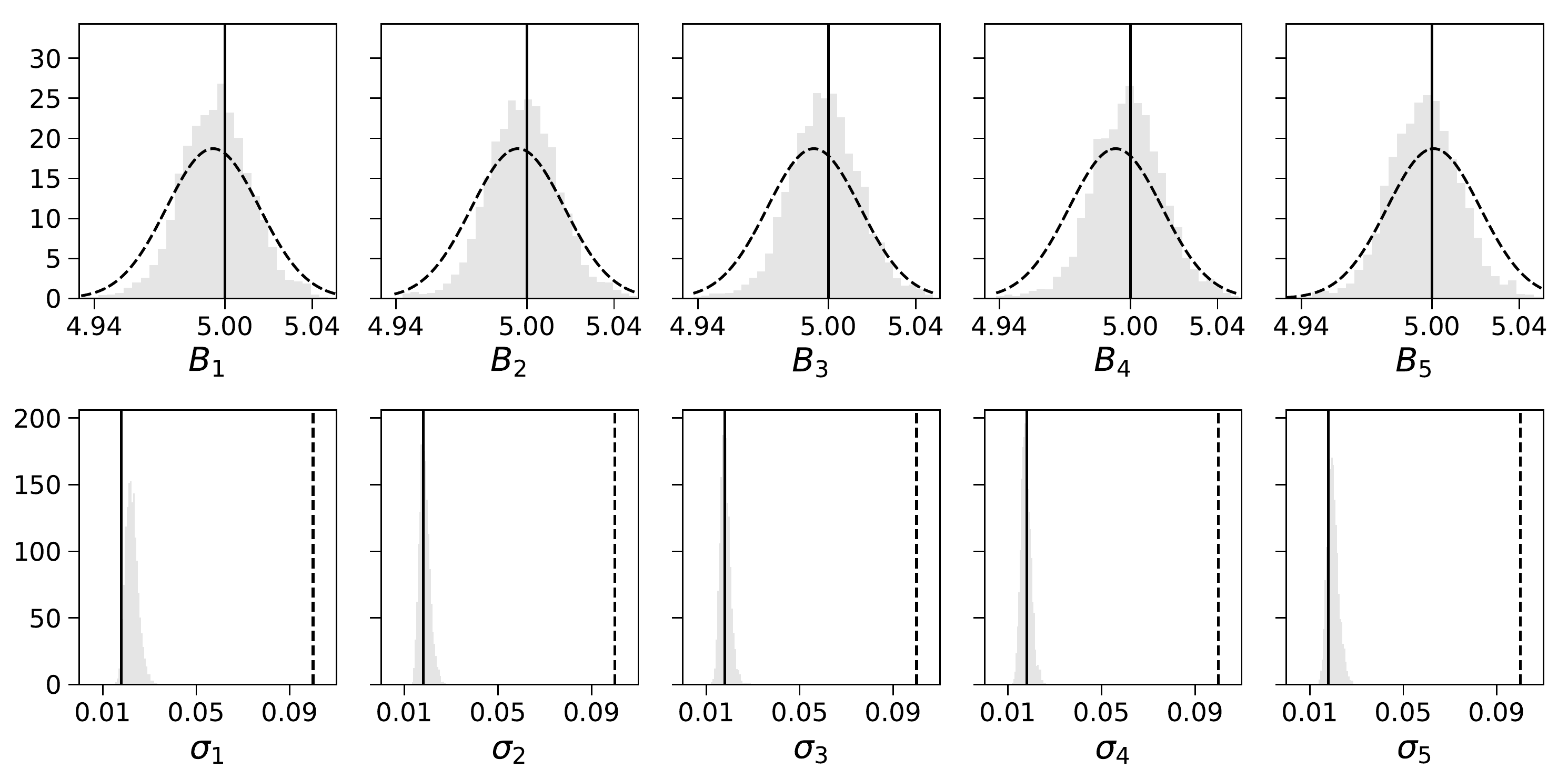}
\caption{Simulation~\upperRomannumeral{2}. Posterior histograms of $\{B_i,\sigma_i\}_{i=1}^5$ corresponding to $\beta = 0.01$ (rows 1 \& 2) and $\beta = 0.001$ (rows 3 \& 4). The solid vertical black lines denote the true values of $B_i=5$ (rows 1 \& 3) and of 
$\sigma_i=0.018$ (rows 2 \& 4). The dashed vertical lines denote the guesstimated value $\sigma_i=0.1$ (rows 2 \& 4). The black dashed density curves in rows 1 \& 3 denote the exact posterior densities of $B_i$ when we set the variances equal to their guesstimated value $\sigma_i^2=0.1^2$. }
\label{fig:simpoissonN10M10B5G3}
\end{figure}

Given $B_i$ and $G_j$, since $\tilde c_{ij}$ is discrete, the theoretical variance $\V[\log(\tilde c_{ij})]$, which is a function of $\lambda=\exp(B_i+G_j)$, can be calculated numerically to any desired accuracy. For reasonably large $B_i+G_j$, we can also approximate $\sigma_{ij}^2$ by the $\delta$-method: 
$\sigma^2_{ij} \approx \V(\tilde c_{ij})/\E^2(\tilde c_{ij})$, where $\E(\tilde c_{ij})$ and $\V(\tilde c_{ij})$ are obtained from (\ref{eq:modified}) with $\lambda=\exp(B_i+G_j)$.
For Simulations~\upperRomannumeral{1}, $\lambda=e^2=7.4$, which leads to $\sigma_{ij}=0.421$ numerically; in contrast, the $\delta$-method gives $\sigma_{ij}\approx 0.367$, a poor approximation due to the smallness of $\lambda$. This is a warning as to the inadequacy of using the log-Normal approximation. In contrast, for Simulation~\upperRomannumeral{2}, $ \lambda = e^8= 2981$, and hence $\sigma_{ij}=0.018$;
the $\delta$-method gives the same figure (to four significant digits). Note that the priors for $\boldsymbol{\sigma}^2$ are inverse Gammas with degrees of freedom $2$ and scale $\beta$ for both simulations; we use $\beta = 0.01$ for Simulation~\upperRomannumeral{1}, and $\beta = 0.01$ as well as $\beta=0.001$ for Simulation~\upperRomannumeral{2}.

Now suppose we set each $\sigma^2_i = 0.1^2$ as guesstimates of the variances. Comparing the histograms and the overlaying curves from Figures~\ref{fig:simpoissonN10M10B1G1} and~\ref{fig:simpoissonN10M10B5G3}, we see in Simulation~\upperRomannumeral{1} that the posterior distributions largely miss their targets, because $\sigma_i^2 = 0.1^2$ is significantly smaller than the variances estimated under the unknown variance model. In contrast, when the true value of $\boldsymbol{\sigma}^2$ is larger than its guesstimate, as in Simulation~\upperRomannumeral{2}, the posterior distributions of the $B_i$ do capture the target, but exhibit longer tails compared to those resulting from estimated variances.  Both phenomena are expected and confirm that if one must guesstimate the variances, it is better to err on the conservative side. Of course, larger variances imply less precision, which leads to less informative results, an inevitable but small price for overestimating the variances.

\subsection{Dealing with Outliers via log-$t$ Model}
\label{subsubsec:simulationsfortmodel}

Simulation~\upperRomannumeral{3}, which is the same as Simulation~\upperRomannumeral{2} except we set $G_1=-2$ to induce outliers, demonstrates the effectiveness of the log-$t$ model in dealing with outliers. Setting $G_1 = -2$ leads to more outliers in the first source as compared to other sources with $G_j = 3$ because the data generating model is Poisson with count rate $\exp(B_i + G_j)$. Under this model the variance of the logarithm of counts is approximately $e^{-B_i - G_j}$ when $B_i+G_j$ is large. Thus, a very small $G_1$ yields a much larger variance relative to the other sources (by a magnitude of $e^{3+2}\approx 150$) and more extreme observed values. This is a realistic mechanism for generating outliers because it represents the case where one of the sources is much fainter than the others. Following the notation in Case 3 of Section~\ref{sec:logtmodel}, setting the shape parameter for the inverse Gamma prior to $\alpha=2$ is the same as setting $\nu=2\alpha=4$ in a $\chi^2_\nu$ prior; the scale parameter $\beta=0.01$ corresponds to $\kappa=\sqrt{2\beta} \approx 0.141$. In this section, we use $\sigma_{ij}^2$ to denote the residual variance for $y_{ij}$. From Simulation~\upperRomannumeral{2}, $\sigma_{ij}=0.018$ for $j>1$. For $j=1$, $\lambda=e^{-2+5}=20.1$, and hence the exact numerical calculation gives $\sigma_{i1}=0.232$, and the $\delta$-method yields $\sigma_{i1}\approx0.223$, which is quite a reasonable approximation. 

Using the first three sources as an example, the upper panel in Figure~\ref{fig:logtlognormalcomparesim3} compares the results from the log-Normal model and the log-$t$ model through the (fitted) standardized residuals, which are given respectively by (using the notation in Section~\ref{section:hierregressionmodel} and Section~\ref{sec:logtmodel}):
\begin{equation}
\widehat{\mathcal{R}}_{ij}=\frac{y_{ij}-\widehat{B}_i-\widehat{G}_j + 0.5 \times \widehat{\sigma}_{i}^2}{\widehat{\sigma}_{i}} \quad {\rm and}
\quad \widehat{\mathcal{R}}_{ij} = \frac{y_{ij}-\widehat{B}_i-\widehat{G}_j + 0.5 \times \kappa^2/\widehat{\xi}_{ij}}{\kappa/\widehat{\xi}_{ij}^{1/2}},
\label{eqn:standardizedresidue}
\end{equation}
where $\widehat{B}_i, \widehat{G}_j, \widehat{\xi}_{ij}$ and $\widehat{\sigma}_i$ are the posterior means. We see some observations from the first source (black circles) are outliers with standardized residuals outside $[-2,2]$, in the log-Normal model (upper panel) but not for the log-$t$ model (lower panel). In the log-Normal model, setting $\sigma_{ij}^2=\sigma^2_i$ causes large standardized residuals due to some source-dependent large variances: $\sigma_{i1}^2 >> \sigma_{ij}^2$, $j\geq 2$. In the log-$t$ model, the outliers are down weighted by $\xi_{ij}$ and each observation is assigned a unique conditional variance $\sigma_{ij}^2=\kappa^2/\xi_{ij}$,  illustrating the benefit of
using the log-$t$ model. 

\begin{figure}[t]
\centering
\centering\includegraphics[width = 0.65\textwidth]{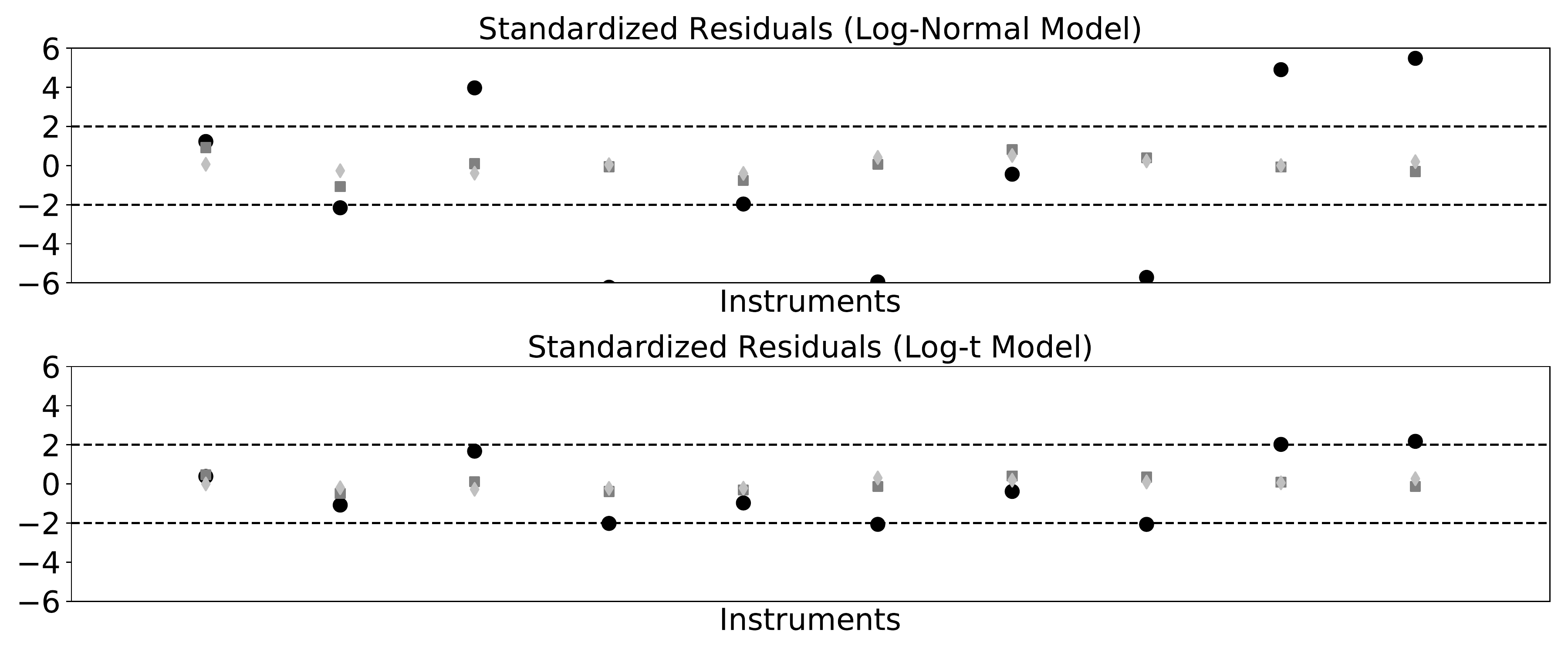}\\
\includegraphics[width = 0.95\textwidth]{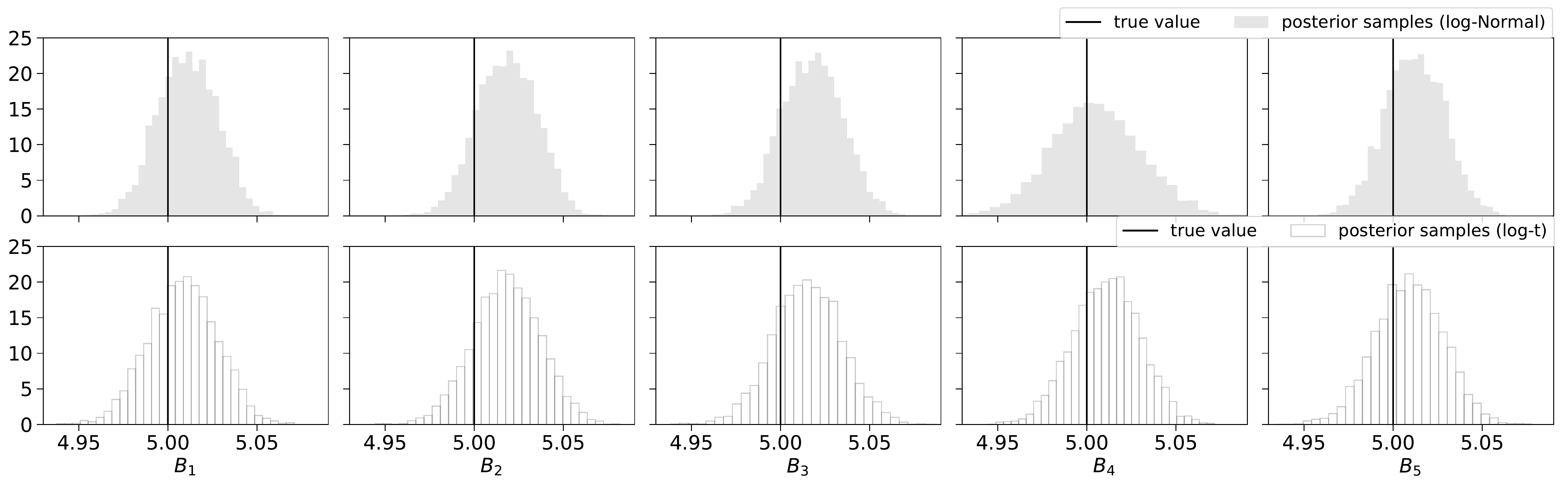}
\caption{Simulation~\upperRomannumeral{3}. Rows 1 and 2 show standardized residuals of the log-Normal model and the log-$t$ model. The black circles, gray squares and silver rhombi represent the first three sources respectively. The instruments are plotted on the x-axes. The dashed horizontal lines denote the $[-2,\, 2]$ intervals. Rows 3 and 4 show posterior histograms of $\{B_i\}_{i=1}^5$ from the log-Normal model and the log-$t$ model, where the black vertical bars indicate true values.} 
\label{fig:logtlognormalcomparesim3}
\end{figure}

The lower panel in Figure~\ref{fig:logtlognormalcomparesim3} shows the posterior distributions of $B_i$ under the log-Normal and log-$t$ models, both capturing the true value. The log-$t$ model exhibits slightly larger variances between the two. This is comforting, especially considering the flexibility of the log-t model, permitting individual $\sigma_{ij}$ rather than the hard-constraint $\sigma_{ij}=\sigma_i$ of the log-Normal model.

The first three rows of Table~\ref{table:coveragecompare} give the (average) coverage of nominal equal-tailed $95\%$ posterior intervals for $B_i$ and $G_j$ obtained from $2000$ simulations with the same configurations as in Simulation~\upperRomannumeral{3}. 
The log-$t$ model is more robust to outliers than the log-Normal model, exhibiting significantly better coverage for $G_1$. (Though coverage under the log-$t$ model is still poor relative to the nominal level.) Table~\ref{table:coveragecompare} also indicates that outliers are not as problematic for estimating $B_i$, our primary interest, as they are for $G_1$. This is because we are more ``informed" about the $B_i$ than the $G_j$ since (1) in this experiment $N=10, M=40$, and hence there are more sources than instruments, and (2) each $B_i$ has an informative prior whereas each $G_j$ only has a flat prior.

\begin{table}[t]
\centering
\begin{tabular}{c|c|c|c|c|c}
\hline
Data Generating & \multirow{2}{*}{Parameter} & \multicolumn{2}{c|}{Coverage Probability} & \multicolumn{2}{c}{Length of Interval}\\ \cline{3-6}
Model (Poisson) && log-Normal & log-$t$ & log-Normal & log-$t$\\ \hline
$N=10,M=40$ & $\boldsymbol{B}$ & [0.941, 0.959] & [0.971, 0.975] & 0.067$\pm$0.005 & 0.073 $\pm$ 0.002\\ \hline
$N=10,M=40$ & $G_1$ & \textit{0.399} & \textit{0.700} & \textit{0.090$\pm$ 0.015} & \textit{0.182$\pm$0.045}\\ \hline
$N=10,M=40$ & $G_2, \ldots, G_M$ & [0.967, 0.977] & [0.996, 0.999] & 0.077$\pm$0.003 & 0.104$\pm$0.002\\ \hline
$N=40,M=40$ & $\boldsymbol{B}$ & [0.953, 0.969] & [0.993, 0.998] & 0.041$\pm$0.007 &0.050$\pm$0.001 \\ \hline
$N=40,M=40$ & $G_1$ & \textit{0.398} & \textit{0.686} & \textit{0.045$\pm$0.003} & \textit{0.093$\pm$0.013} \\ \hline
$N=40,M=40$ & $G_2, \ldots, G_M$ & [0.965,0.977] & [0.996,0.999] & 0.038$\pm$0.001 & 0.051$\pm$0.001 \\ \hline
\end{tabular}
\caption{Coverage of nominal 95\% posterior intervals calculated from $2000$ datasets simulated under a Poisson model using the same configurations as in Simulation~\upperRomannumeral{3}. The intervals in columns 3 and 4 give the smallest and largest coverage observed for the corresponding parameter. The last two columns give the lengths of nominal 95\% intervals in the format: mean $\pm$ standard deviation.}
\label{table:coveragecompare}
\end{table}

As illustrated in Table~\ref{table:coveragecompare}, when $N$ is increased from $10$ to $40$, the coverage of $\boldsymbol{G}$ changes little. The coverage of $\boldsymbol{B}$ on the other hand increases noticeably even with narrower intervals, especially under the log-$t$ model. The narrowing of the intervals for $\boldsymbol{G}$ is expected with more instruments per source.  The simultaneous increase of coverage and decrease of interval widths for $\boldsymbol{B}$ is intriguing. It is a welcome finding from the astrophysics application perspective. But it also indicates potential defects in the log-Normal or the log-$t$ approximation because over-coverage suggests a non-optimal posterior uncertainty calibration. The half-variance correction likely plays a role here because it permits uncertainty in variance estimation to directly affect inference for the mean. Overall, we recommend  the log-$t$ model when one suspects serious outliers. This may lead to unnecessarily larger error bars for flux estimates that are not (directly) affected by the outliers, a worthwhile premium against disastrous loss of coverages for estimands that are affected.

\section{Applying the Proposed Methods to IACHEC Data}
\label{section:realdataresults}

In this section, we fit the log-Normal model to three datasets (given in Appendix~\ref{appendix:tablesofdata}) compiled by researchers from \citet{IACHEC}, with the aim of increasing understanding of the calibration properties of various X-ray telescopes (a.k.a. instruments) such as \textit{Chandra}, \textit{XMM-Newton}, \textit{Suzaku}, \textit{Swift}, etc. See~\citet{herman2017} for details on data collection and preprocessing.

\subsection{E0102 Data}
\label{section:e0102data}

SNR 1E 0102.2-7219 (abbreviated as E0102) is the remnant of a supernova that exploded in a neighboring galaxy known as the Small Magellanic Cloud~\citep{E0102} and is a calibration target for a variety of X-ray missions. We consider four photon sources associated with E0102. Each is a local peak or ``line'' in the E0102 spectrum, which can be thought of as a high-resolution histogram of the energies of photons originating from E0102. Our ``sources'' corresponds to the photon counts in four bins of this histogram. Two of the lines are associated with highly ionized Oxygen (Hydrogen Lyman-$\alpha$ like O\,VIII at 18.969\AA\ and the resonance line of O\,VII from the He-like triplet at 21.805\AA) and the other two are associated with Neon (H-like Ne\,X at 12.135\AA\ and He-like resonance line Ne\,IX at 13.447\AA). We consider replicate data obtained with $13$ different detector configurations respectively over 4 separate telescopes, {\sl Chandra} (HETG and ACIS-S), XMM-{\sl Newton} (RGS, EPIC-MOS, EPIC-pn), {\sl Suzaku} (XIS), and {\sl Swift} (XRT).

Because the energies of the two Oxygen lines are similar, it is reasonable to assume that their associated Effective Areas are also similar; likewise for the Neon lines and their Effective Areas. Thus, we consider two separate datasets, one with O\,VII and O\,VIII  and the other with Ne\,IX and Ne\,X, each with $M=2$ and $N=13$. In this way, we have more confidence in the multiplicative model (\ref{eqn:multiplicative}) than if we were to combine the two into a single dataset with $M=4$ and $N=13$. In addition, 
standard astronomical practice is to work with dimensionless
measurements in log space (e.g., optical magnitudes in a given passband are defined as $-2.5\log_{10} \frac{\rm flux}{\rm flux~from~Vega}$).  Since our log transformation of brightness mimics this process, we also normalize the measured line fluxes by those from an arbitrarily chosen detector, as done in \cite{plucinsky2017snr}.

To apply the log-Normal model to the two datasets, we choose priors with 
hyperparameters $\alpha=1.5$, $\beta = 2\times 10^{-4}$ for O\,VII, O\,VIII  and $\beta=8\times 10^{-5}$ for Ne\,IX, Ne\,X. We set each $b_i=0$, i.e., a priori we expect no adjustment is needed, with confidence $\tau_i$, taking two possible values $\tau_i=0.025$ and $\tau_i=0.05$. These (and subsequent) choices are based on astronomers' knowledge. 
\begin{figure}[t]
\centering
\includegraphics[width = 0.95\textwidth]{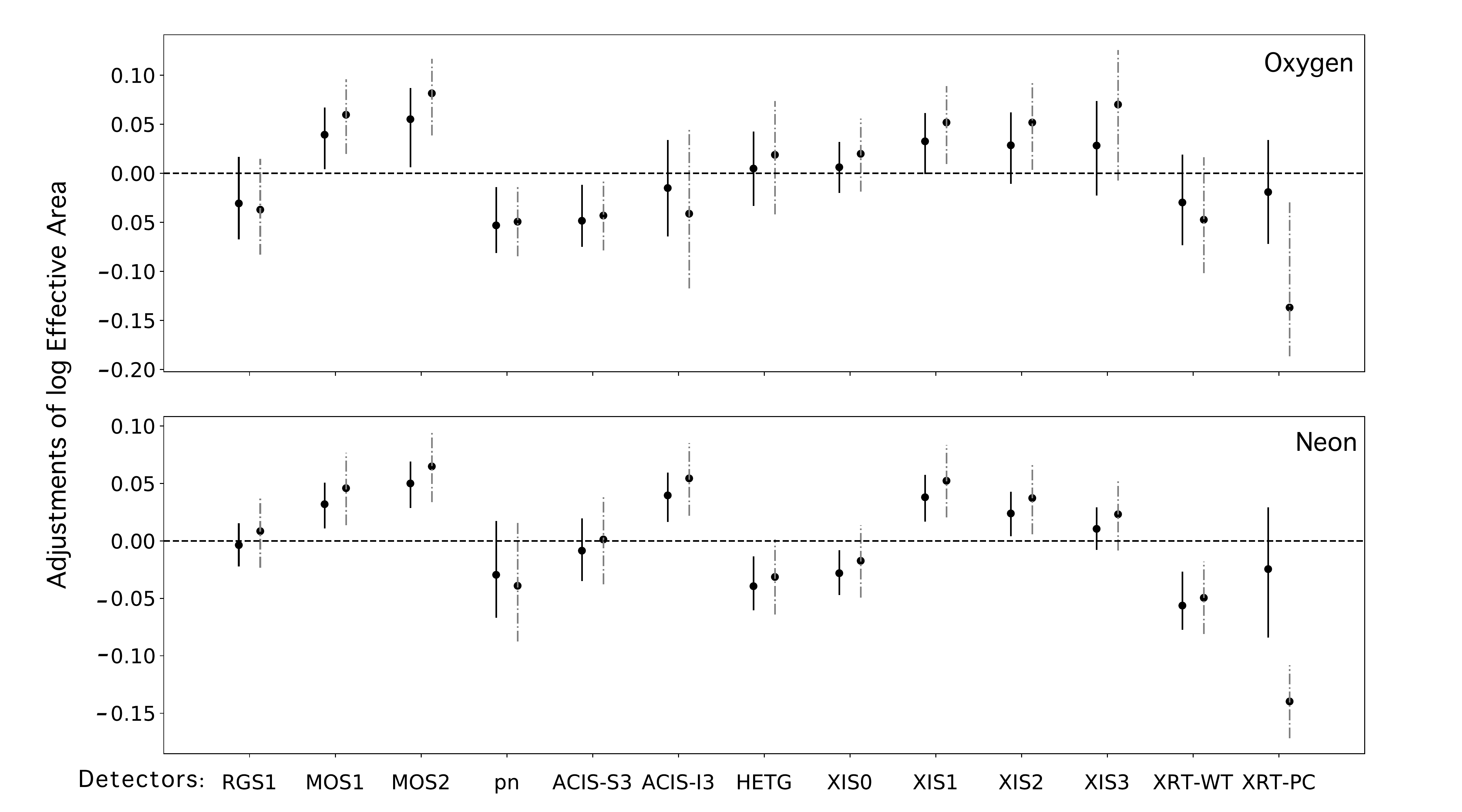}
\caption{Adjustments of the logarithm of the Effective Areas for Oxygen (row 1) and Neon (row 2) in the E0102 dataset. The x-axis labels the detectors (instruments) and the y-axis is $\boldsymbol B$. The horizontal dashed lines represent zero, which indicates no adjustments for the Effective Areas. The vertical bars denote $95\%$ posterior interval for each $B_i$, and the dots denote the posterior means. The black and gray bars correspond to $\tau_i=0.025$ and $0.05$, respectively.}
\label{crossbandNEO}
\end{figure}

Figure~\ref{crossbandNEO} shows the adjustments of the log-scale Effective Areas for Oxygen (row 1) and Neon (row 2) data. We see that the estimated values of $B_i$ are not sensitive to the choices of $\tau_i$ except for detector XRT-PC. 
For XRT-PC, with the Neon data, the estimated shrinkage factor towards the prior, $1-W_i$, as given in (\ref{eq:weight}), is $0.91$ with $\tau_i=0.025$ and $0.02$ with $\tau_i=0.05$. This indicates that, if the prior variance of $B_i$ is too small ($\tau_i=0.025$ here), the model treats the observations as being less accurate (by fitting a large $\sigma_i$) instead of further adjusting the Effective Area of the corresponding instrument (a larger deviation from $b_i$). Numerical results presented in Table~\ref{table:priorinfluence} in  Appendix~\ref{appendix:tablesofdata} reveal that the estimated shrinkage factors can vary slightly or drastically with $\tau_i$. 
Since $M =2$, sensitivity to the choice of hyperparameters is expected. A feature of the log-Normal model is the direct link between its  mean and variance stemming from the half-variance correction. This link indicates additional sensitivities that are neither commonly observed nor well studied.

From Figure~\ref{crossbandNEO} and Table~\ref{table:priorinfluence} (Appendix~\ref{appendix:tablesofdata}), XRT-PC has a much lower Effective Area than the other instruments: about $ -0.15$ versus between $[-0.05, 0.05]$ on the log scale. The corresponding estimated shrinkage factor is more sensitive to the choice of $\tau_i$, for both the Oxygen and Neon data: when $\tau_i$ is small ($=0.025$), the posterior mean of $B_i$ is constrained too much by its zero-centered prior. Thus we need a larger $\sigma_i$ to compensate for the the influence of the prior.
In contrast,  when $\tau_i$ is larger (we also tried $0.05, 0.075, 0.1$), the estimated shrinkage factor is not as sensitive to  $\tau_i$. Overall, Figure~\ref{crossbandNEO} suggests that the Effective Areas of MOS1, MOS2, XIS1, XIS2, XIS3 need to be adjusted upward and those of 
pn, XRT-WT, XRT-PC need to be adjusted downward.

\subsection{2XMM Data}
\label{section:2xmmhmsdata}

The 2XMM catalog~\citep{watson09} can be used to generate large, well-defined samples of various types of astrophysical objects, notably active galaxies (AGN), clusters of galaxies, interacting compact binaries, and active stellar coronae, using the power of {X-ray selection}~\citep{XMM}. The 2XMM catalog data are collected with the XMM-Newton European Photon Imaging Cameras (EPIC). Briefly, there are three EPIC instruments:
the EPIC-pn (hereafter referenced as ``pn'') and the two EPIC-MOS detectors
(hereafter referenced as ``MOS1'' and ``MOS2'').
These detectors have separate X-ray focusing optics but are co-aligned
so that the sources in our samples are observed simultaneously in the pn, MOS1,
and MOS2 detectors.

Our 2XMM data contain three datasets, corresponding to the hard (2.5 - 10.0 keV), medium (1.5 - 2.5 keV) and soft (0.5 - 1.5 keV) energy bands. The three instruments (pn, MOS1 and MOS2) measured 41, 41, and 42  sources respectively in  hard, medium, and soft bands. The sources are from the 2XMM EPIC
Serendipitous Source Catalog \citep{watson09}, selected to be sufficiently
faint that  the thorny issue of ``pileup'', which occurs when several photons hit the detector at the same time, can be ignored. With sufficient exposure, on average 1,500 counts are collected from the faint sources in each band of the detector~\citep{herman2017}. 

\begin{figure}[t]
\centering
\includegraphics[width = 0.95\textwidth]{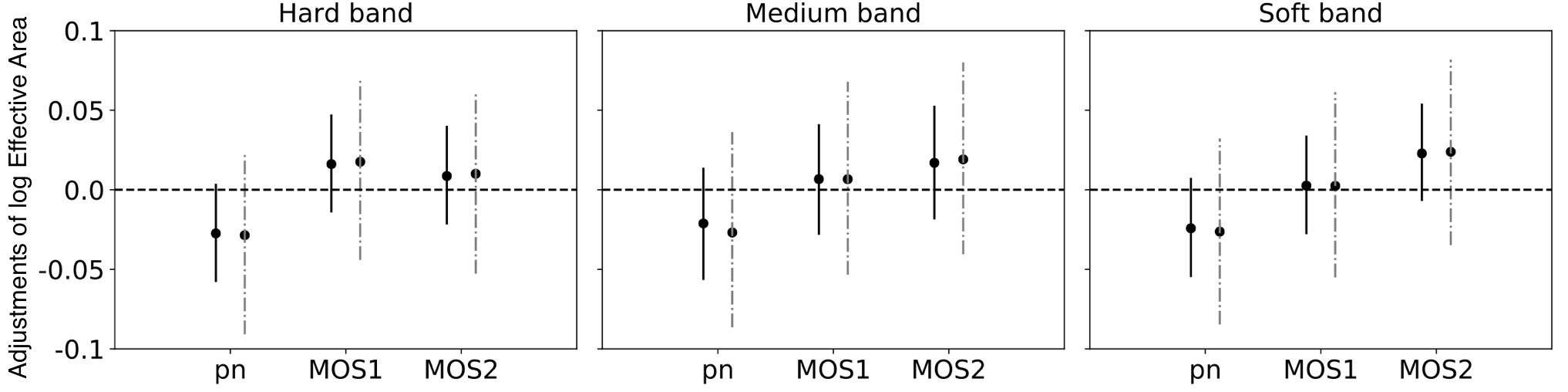}
\caption{Adjustments of the log-scale Effective Areas for hard band (left), medium band (middle) and soft band (right) of the 2XMM datasets. The legend is the same as in Figure~\ref{crossbandNEO}.} 
\label{crossband2XMM}
\end{figure}

The log-Normal model is fit to the three datasets separately, with $\beta = 0.014, 0.083$ and $0.022$ 
respectively for the hard, median, and soft bands, but with $\alpha=1.5$ for all three. We again use $b_i=0$ and try $\tau_i=0.025$ and $\tau_i = 0.05$. Figure~\ref{crossband2XMM} shows the resulting adjustments of the log-scale Effective Area, and confirms the astronomers' expectation that no adjustment is needed for 2XMM, regardless of the choice of the $\tau_i$. In contrast to Table~\ref{table:priorinfluence}, Table~\ref{table:priorinfluencehardbandetc} (also in Appendix A) shows a much more stable patterns of proportion of prior information for 2XMM data.

\subsection{XCAL Data}
\label{section:hmsdata}

XCAL consists of bright AGN from the XMM-Newton cross-calibration sample\footnote{See Section 4 in \url{http://xmm2.esac.esa.int/docs/documents/CAL-TN-0052.ps.gz}}. The image data are clipped, using a standard XMM software task (called {\tt epatplot}),
to eliminate the regions affected non-trivially by pileup. The amount of clipping depends
on the observed source intensity: unused regions are larger for brighter
sources \citep{herman2017}. The initial estimate of the Effective Area is then adjusted according to lookup tables (from other in-flight data) to account for the unused regions.  Like the 2XMM data, XCAL data are composed of three datasets: the hard (94 sources), medium (103), and soft (108) bands, all measured by three instruments, pn, MOS1 and MOS2. We use the same procedure and hyperparameters as in Section~\ref{section:2xmmhmsdata}, except we set $\beta = 8.0\times 10^{-4},8.6\times 10^{-3}$ and $6.8\times 10^{-4}$ respectively for hard, median, and soft bands.

Figure~\ref{medbandG} demonstrates that adjustment of the Effective Areas is needed to align the measured fluxes across the detectors. Results are presented for four sources from the medium band data, where the left three bars---corresponding to three instruments---depict the $95\%$ intervals (mean $\pm$ 2 given standard deviations) for the log-fluxes obtained by a standard astronomical method. The right two bars---corresponding to two choices of the prior variance $\tau_i$---represent the $95\%$ posterior intervals of log-fluxes after adjustment using our log-Normal model. This visualization illustrates the reliability of our calibration of Effective Areas, as it helps to bring together the varied flux estimates from individual detectors in a statistically principled way. In particular, we see that the posterior mean of the log-flux is rather robust to the choice of $\tau_i$, yet the corresponding posterior variance respects astronomers' a priori knowledge as coded into $\tau_i$.  

\begin{figure}[t]
\centering
\includegraphics[width = 0.8\textwidth]{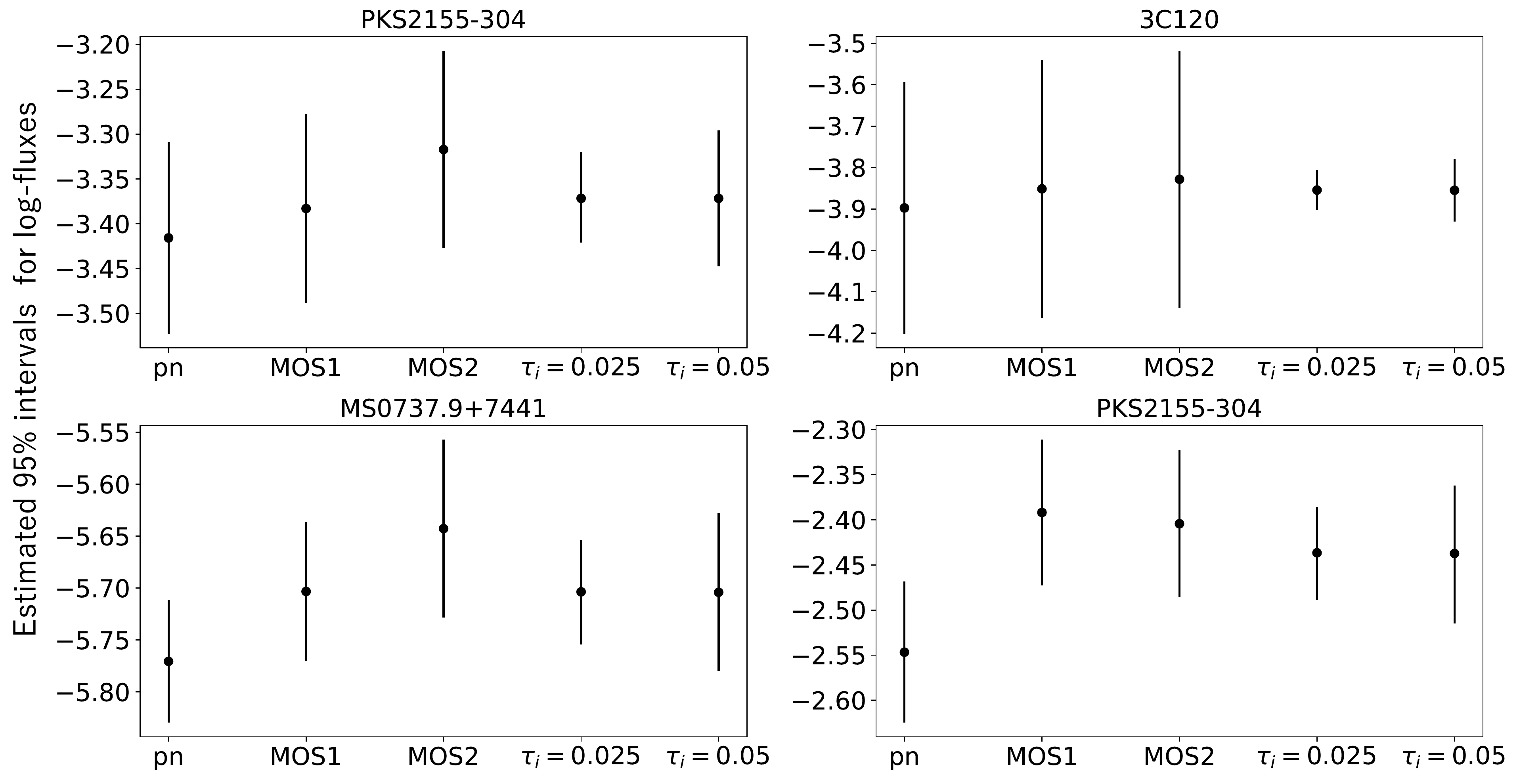}
\caption{Comparison of estimated $95\%$ intervals for log-fluxes using a standard astronomical method (left three bars) and those from the log-Normal model (right two bars) for four representative medium-band sources in XCAL data, as indicated by the panel titles.}
\label{medbandG}
\end{figure}

Finally, we show how to adjust the Effective Areas of each instrument to obtain the results illustrated in the rightmost interval in each panel of Figure~\ref{medbandG}. Figure~\ref{crossbandv5} shows the necessary adjustment of $\boldsymbol B$ for hard band (left), medium band (middle) and soft band (right). For all these bands, we adjust the Effective Area of pn downward and that of MOS2 upward.

\begin{figure}[t]
\centering
\includegraphics[width = 0.95\textwidth]{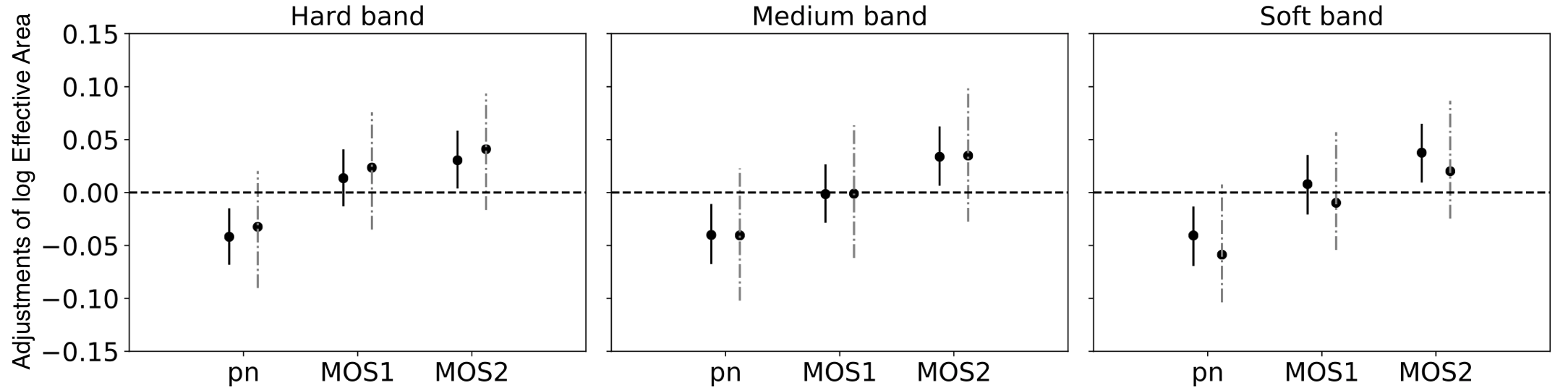}
\caption{Adjustments of the log-scale Effective Areas for hard band (left), medium band (middle) and soft band (right) based on XCAL data. The legend is the same as in Figure~\ref{crossbandNEO}.}
\label{crossbandv5}
\end{figure}

\subsection{Model Checking}
\label{section:goodnessoffit}

To check how well the log-Normal model captures the observed variability in the data, we use residual plots to visualize the goodness-of-fit. Figure~\ref{fig:residueexample} plots the standardized residuals $\widehat{\mathcal{R}}_{ij}$ 
for the data analyzed in Section~\ref{section:hmsdata} with $\tau_i=0.05$, with the left panels denoting residuals from the log-Normal model and the right panels from the log-$t$ model (see the two expressions of $\widehat{\mathcal{R}}_{ij}$ in~\eqref{eqn:standardizedresidue}). Nearly all residuals fall in $[-3,\, 3]$ under the log-Normal model and $[-2,\, 2]$ under the log-$t$ model. The observations of \texttt{3C111} in all three energy bands are the only outliers under the log-Normal model but are not outliers under the log-$t$ model, confirming the latter's ability to handle outliers. The adjusted Effective Areas and the estimated fluxes are not too sensitive to whether or not the outliers are excluded. Thus the log-Normal model is acceptable for the data in Section~\ref{section:hmsdata}. 

\begin{figure}[t]
\centering
\includegraphics[width = 0.99\textwidth]{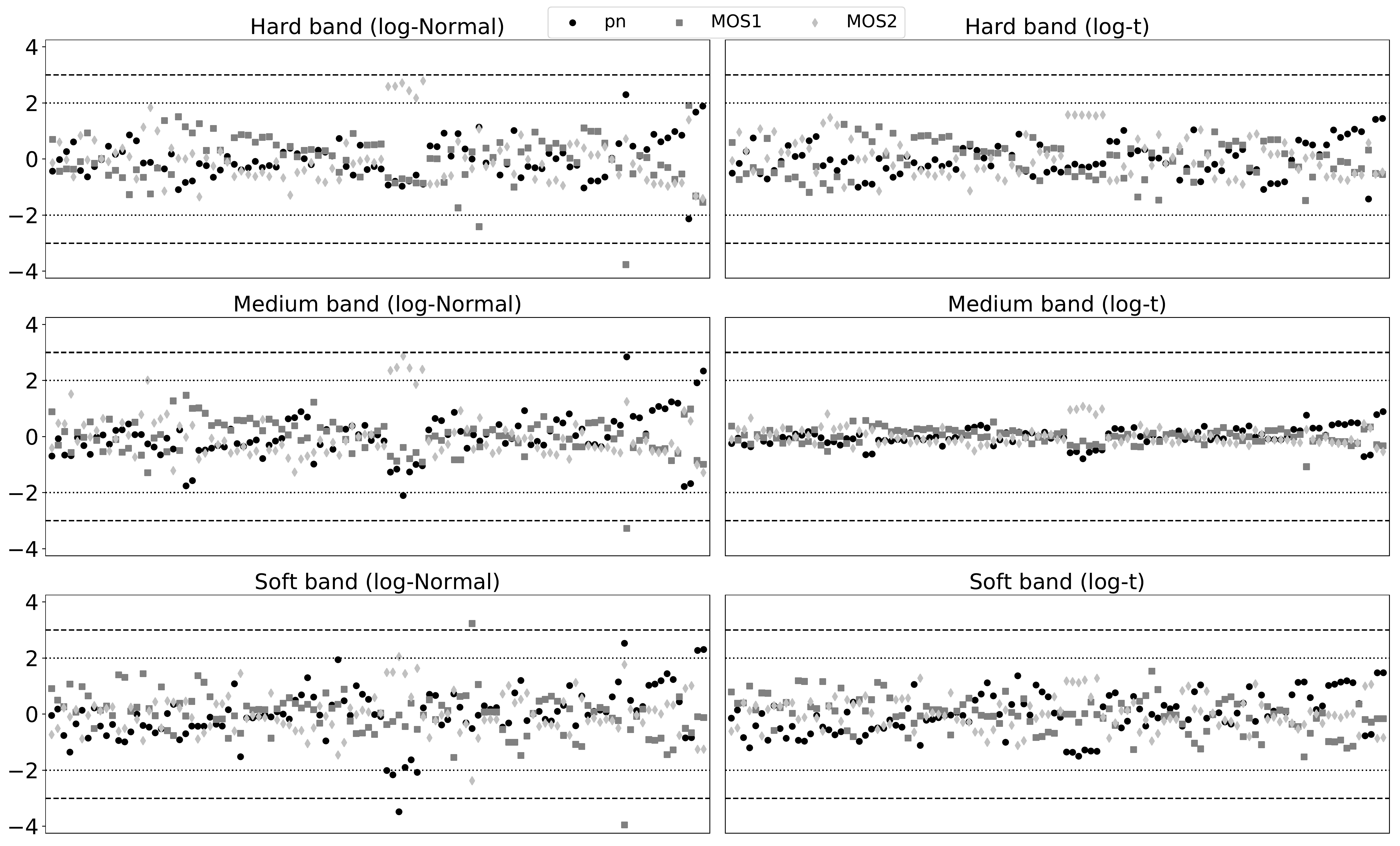}
\caption{Standardized residuals for the XCAL data in Section~\ref{section:hmsdata} with $\tau_i=0.05$. Left panel for the log-Normal model and right panel for the log-$t$ model. The black circles, gray squares and silver rhombi denote the instruments pn, MOS1 and MOS2 respectively. The dashed and dotted lines depict respectively the $[-3,\, 3]$ and $[-2, 2]$ intervals.}
\label{fig:residueexample}
\end{figure}

We also employ a posterior predictive check~\citep{meng1994posterior, gelman1996posterior} for the log-Normal model. In a posterior predictive check, one chooses test statistics and computes the posterior predictive $p$-value. The test statistics we choose are $$\left\{T_i = \overline{y}_{i\cdot} -\overline{y} = \frac{\sum_{j=1}^M y_{ij}}{M} - \frac{\sum_{i=1}^N\sum_{j=1}^M y_{ij}}{NM}\right\}_{i=1}^{N},$$ which reflect the relative magnitudes of the log scale Effective Areas. None of the posterior predictive $p$-values for any of our datasets are significant, i.e., we never fail the posterior predictive check. This does not, however, imply that no serious model defects exist. Below we  discuss directions for improving our models and ultimately the reliability of the proposed concordance adjustments.

\section{Alternative Methods and Future Work} 
\label{sec:discussion}
\subsection{Comparing Likelihood and Bayesian Estimations}
\label{section:alternative}

In Section~\ref{section:calibrationmodel}, we adopt a Bayesian perspective, which leads to the log-Normal model. Alternatively, we could view $b_i$ as a \textit{noisy observation} with known variance $\tau_i^2$: ${b}_i {\rm \stackrel{indep}{\sim}}  \mathcal{N}({B}_i, \tau_i^2)$. Together with (\ref{eq:threey}), this gives a multivariate Normal regression model and can be fit via maximum likelihood estimation (MLE). In particular, when $\boldsymbol{\sigma^2}$ is known, Proposition~\ref{proposition:variance_BG_known_var} in Appendix~\ref{appendix:mlecompute} gives closed-form expressions when all the instruments measure all sources, which implies the asymptotic properties of the MLEs (Corollary~\ref{coro:asymptoticvarianceMLE}). Furthermore, the standardized residual sum of squares follows a $\chi^2$ distribution, which enables testing of the goodness-of-fit; see Appendix~\ref{subsec:goodnessoffitregression} for details.

When the variances $\boldsymbol{\sigma^2}$ are unknown, in principle, we can still obtain the (asymptotic) variance of the MLEs by calculating the observed/expected Fisher information. However, the number of unknown parameters we consider, $2N+M$, grows with the number of observations $NM + N$. Conditions on the relationship between $N$ and $M$ that would ensure the classical asymptotic theory for MLEs would be of interest to those who prefer likelihood inference. Furthermore, under such conditions, these  estimators should be \textit{approximately} valid even if the Normal assumptions made in (\ref{eq:threey}) for the $e_{ij}$ fail. In this case, the variance of the estimator requires a more complicated ``sandwich" formula~\citep{freedman2006so}, involving both the Fisher information and the variance of the score function. We say \textit{approximately} valid because the half-variance correction of Section~\ref{section:calibrationmodel} would depend on the log-Normal assumption. Consequently, when the variance is large, the half-variance correction can be misleading  if the log-Normal assumption is severely violated. 

As usual, the likelihood method is closely related to the Bayesian approach. For example, when  $\boldsymbol{\sigma^2}$ is known, the MLEs of $\boldsymbol{B}$ and $\boldsymbol{G}$ correspond to the MAP estimates defined in (\ref{eqn:shrinkBG}), which also have the intuitive interpretation as shrinkage estimators. When the variances are unknown, the likelihood function is unbounded at the boundary of the parameter space ($\sigma_i^2=0$). The conjugate priors for variance parameters in the Bayesian model regularize the likelihood and give a proper posterior distribution. This is another reason we adopt the Bayesian approach.

\subsection{Future and Related Work}
\label{sec:conclusion}

The log-Normal model works reasonably well in our applied illustrations,  and it yields important findings that are welcomed by astronomers --- concrete guidance about systematic adjustments of the Effective Areas are given and thus concordance of an intrinsic characteristic of each astronomical object across different instruments can be achieved. Calibration scientists are thus able to make \textit{absolute} measurements of characteristics of astronomical objects using different instruments. The posterior distribution of the Effective Area of each instrument can and should be used for downstream analysis of measured fluxes to obtain principled estimates of absolute flux and to properly quantify their uncertainty. Furthermore, we highlight the danger of incorrectly fixing the observation noise through simulation experiments that mimic possible realistic uncertainties.

There are several directions of future work in order to improve the current model. First, so far we have assumed that the Effective Areas are a priori independent, which is not always true in practice. Sometimes the Effective Areas across different energy bands are noticeably correlated. This correlation structure should be taken into account in future modeling to gain more efficiency in estimation. Second, the log-Normal model gives conservative results under realistic model misspecification, as revealed by our simulation studies. Unfortunately, the scope of simulation studies is always limited. Hence theoretical properties of both the log-Normal and the log-$t$ approaches under model misspecification need to be further investigated. Third, the asymptotic (sampling) properties of the proposed models need to be established, as both the number of parameters and the number of observations approach infinity. Fourth, the robustness of the model with respect to possible sample selection bias and non-ignorable missing data needs to be studied more thoroughly. Although this is not of concern for our current analyses, it could become a more severe problem if we include more instruments and more sources in the calibration data. Last, possible hierarchical extensions of the model that addresses population characteristics, in which we are interested in the flux distribution of a certain type or population of objects instead of each individual object, as in the case of supernovae, could be considered. Of course, in such cases, a representative sample of the population of objects is critical for a meaningful analysis.

Moving forward, to increase the impact of the proposed method, we need to involve more IACHEC members and datasets.  Cooperation among IACHEC member projects can lead to enacting adjustments as recommended from the concordance analysis, which will result in closer agreement between different instruments that make similar measurements, to achieve a main goal of IACHEC.  Experts from the projects that comprise the IACHEC are needed to examine possible bias in sample selection and to set the values of $\boldsymbol\tau$ that are needed for the concordance analysis.  In our follow-up paper \citep{herman2017}, we apply this concordance analysis more broadly and allow the values of $\boldsymbol\tau$ to be instrument-dependent.

Finally, calibration is a well-known problem in several areas of applications. For example, inter-laboratory calibration (motivated from analytical chemistry) is studied in \citet{gibbons2001weighted} and \citet{bhaumik2005confidence}, where they also address simultaneously the issue of multiplicative signals and additive noises, but with a different modeling strategy.  More recently, a fiducial approach is used in~\citet{hannig2017fusion} to tackle similar problems. We therefore hope our modeling strategies add to the toolkits to conduct similar calibration and concordance analysis, such as for environmental monitoring  \citep[e.g.,][]{weatherhead1998factors}. Much more  work is needed and can be done, and hence we invite and encourage interested researchers to join us to address these theoretically challenging and practically impactful problems.

\section*{Acknowledgement}

This project was conducted under the auspices of the CHASC International Astrostatistics Center. CHASC is supported by the NSF grants DMS-15-13484, DMS-15-13492, DMS-15-13546, DMS-18-11308, DMS-18-11083, and DMS-18-11661. In addition, David van Dyk's work was supported by a Marie-Skodowska-Curie RISE (H2020-MSCA-RISE-2015-691164) Grant provided by the European Commission. Vinay Kashyap and Herman Marshall acknowledge support under NASA Contract NAS8-03060 with the {\sl Chandra} X-ray Center.  We also thank Matteo Guainazzi, Paul Plucinsky, Jeremy Drake, Aneta Siemiginowska, and other members of the IACHEC and CHASC collaborations for valuable discussions.

\baselineskip=10pt
\bibliographystyle{apalike}
 
\bibliography{calibrationbib}

\spacingset{1.45}

\appendix

\section{Tables of Data Description and Prior Influence}
\label{appendix:tablesofdata}

Tables~\ref{table:e0102} and~\ref{table:2XMM_and_XCAL} give summaries of the data used in Sections~\ref{section:e0102data},~\ref{section:2xmmhmsdata} \&~\ref{section:hmsdata}.

\begin{table}[tbph]
\centering
\begin{tabular}{c|cc|cc}
\hline
Lines (Sources) & He-like OVII & H-like OVIII & He-like Ne\,IX & H-like Ne\,X\\
Spectrum & 21.805\AA & 18.969\AA & 13.447\AA & 12.135\AA\\
\hline
\end{tabular}\\
\vspace{0.2in}
\begin{tabular}{c|cccc}
\hline
Telescopes & {\sl Chandra} & XMM-{\sl Newton} & {\sl Suzaku} & {\sl Swift}\\
Detectors (Instruments) & HETG, ACIS-S & RGS, EPIC-MOS, EPIC-pn & XIS & XRT \\
\hline
\end{tabular}
\caption{Summary of E0102 data. The first table gives the sources for two data sets, Highly ionized Oxygen and Neon. The second table gives instruments for both data sets.}
\label{table:e0102}
\end{table}

\begin{table}[tbph]
\centering
\begin{tabular}{c|ccc|ccc}
\hline
Observatory & \multicolumn{6}{|c}{XMM-{\sl Newton} European Photon Imaging Cameras (EPIC)}\\
\hline
Detectors (Instruments) & \multicolumn{6}{c}{EPIC-pn (pn), EPIC-MOS (MOS1 \& MOS2)} \\
\hline
Data Acronym & \multicolumn{3}{|c|}{2XMM} & \multicolumn{3}{|c}{XCAL}\\
\hline
Energy Band & Hard & Medium & Soft & Hard & Medium & Soft\\
Energy (keV) & 2.5-10.0 & 1.5-2.5 & 0.5-1.5 & 2.5-10.0 & 1.5-2.5 & 0.5-1.5\\
\hline
No. Sources & 41 & 41 & 42 & 94 & 103 & 108 \\
\hline
\end{tabular}
\caption{Summary of 2XMM data and XCAL data. The number of instruments is $N=3$ (pn, MOS1, MOS2) and the number of sources ($M$) is given in the last row for the six data sets, three from different energy bands of 2XMM data and XCAL data respectively.}
\label{table:2XMM_and_XCAL}
\end{table}

\newpage
\begin{table}[tbph]
\centering
\begin{tabular}{c|cc|cc} 
Instrument & \multicolumn{2}{|c|}{Oxygen} & \multicolumn{2}{|c}{Neon} \\
& $\tau = 0.025$ & $\tau = 0.05$  & $\tau = 0.025$ & $\tau = 0.05$  \\
\hline
RGS1 & 0.570 & 0.205 & 0.063 & 0.016 \\
MOS1 & 0.279 & 0.077 & 0.075 & 0.019 \\
MOS2 & 0.355 & 0.065 & 0.077 & 0.017 \\
pn & 0.250 & 0.041 & 0.620 & 0.218 \\
ACIS-S3 & 0.218 & 0.040 & 0.270 & 0.088 \\
ACIS-I3 & 0.906 & 0.640 & 0.099 & 0.026 \\
HETG & 0.648 & 0.341 & 0.129 & 0.034 \\
XIS0 & 0.180 & 0.051 & 0.069 & 0.018 \\
XIS1 & 0.298 & 0.078 & 0.071 & 0.019 \\
XIS2 & 0.463 & 0.140 & 0.063 & 0.016 \\
XIS3 & 0.772 & 0.364 & 0.062 & 0.018 \\
XRT-WT & 0.726 & 0.278 & 0.154 & 0.026 \\
XRT-PC & 0.934 & 0.235 & 0.906 & 0.017 \\
\hline
\end{tabular}
\caption{Proportion of prior influence, as defined by $1-W_i$ (of (\ref{eq:weight})), for E0102 data in Section~\ref{section:e0102data}.}
\label{table:priorinfluence}
\end{table}

\begin{table}[tbph]
\centering
\begin{tabular}{c|ccc|ccc}
Data Name  & \multicolumn{3}{|c|}{$\tau_i=0.025$} & \multicolumn{3}{|c}{$\tau_i=0.05$} \\
&  pn & mos1 & mos2 & pn & mos1 & mos2 \\
\hline
hard band 2XMM  & 0.093 & 0.075 & 0.082 & 0.025 & 0.020 & 0.022\\
medium band 2XMM  & 0.250 & 0.216 & 0.222 & 0.076 & 0.065 & 0.067\\
soft band 2XMM  & 0.093 & 0.075 & 0.069 & 0.025 & 0.020 & 0.018\\
hard band XCAL & 0.010 & 0.019 & 0.031 & 0.003 & 0.005 & 0.008 \\
medium band XCAL & 0.023 & 0.016 & 0.028 & 0.006 & 0.004 & 0.007
\\
soft band XCAL & 0.021 & 0.011 & 0.007 & 0.005 & 0.003 & 0.002\\
\hline
\end{tabular}
\caption{Proportion of prior influence for data used in the analysis in Sections~\ref{section:2xmmhmsdata} and~\ref{section:hmsdata}.}
\label{table:priorinfluencehardbandetc}
\end{table}

\newpage

\section{Details of Fitting the Log-Normal Model}
\label{appendix:bayescomputation}

The following three MCMC algorithms are used for our posterior sampling. 

\begin{enumerate}
\item \textbf{Standard Gibbs Sampler:} iterates the following three sets of conditional distributions, all easily derived from (\ref{eqn:joint}):
\begin{enumerate}
\item Conditioning on $\boldsymbol{G}$ and $\boldsymbol{\sigma}^2$, sample $B_i$ independently for $i=1,\ldots, N$  from 
\begin{equation*}
\mathcal{N} \bigg(\frac{b_i/ \tau_i^2 + \sum_{j\in J_i} (y_{ij} +0.5\sigma_{i}^2 - G_j) / \sigma_{i}^2 }{1/\tau_i^2 + \sum_{j\in J_i} 1/\sigma_{i}^2},\  \frac{1}{1/\tau_i^2 + \sum_{j\in J_i} 1/\sigma_{i}^2}\bigg).
\end{equation*}
\item Conditioning on $\boldsymbol{B}$ and $\boldsymbol{\sigma}^2$, sample $G_j$ independently for $1\leq j\leq M$ from 
\begin{equation*}
\mathcal{N} \bigg(\frac{\sum_{i\in I_j} (y_{ij}+0.5\sigma_{i}^2- B_i) / \sigma_{i}^2}{\sum_{i\in I_j} 1/\sigma_{i}^2},\  \frac{1}{\sum_{i\in I_j} 1/\sigma_{i}^2}\bigg).
\end{equation*}
\item Conditioning on $\boldsymbol{B}$ and $\boldsymbol{G}$, sample $\sigma_i^2$ independently for  
$i=1, \ldots, N$ from 
\begin{equation*}
\sigma_{i}^{-|J_i|-2-2 \alpha}\exp\left\{- \frac{1}{2}\frac{\sum_{j\in J_i} (y_{ij}-B_i-G_j)^2+2\beta}{\sigma^{2}_{i}}-\frac{|J_i|\sigma^2_i}{8}\right\}
\end{equation*}
via the Metropolis-Hastings algorithm using a simple random walk proposal (Gaussian proposal) on the log-scale, i.e., $\log (\sigma_i^2)$. \label{stepc}
\end{enumerate}
\item \textbf{Block Gibbs Sampler}: same as above except replace the two conditional steps (1a) and (1b) by a joint draw of $\{\boldsymbol{B},\boldsymbol{G}\}$ from $(N+M)$-dimensional Gaussian with mean $\boldsymbol{\Omega(\sigma^2)}^{-1} \boldsymbol{\gamma(\sigma^2)}$ and covariance matrix 
$\boldsymbol{\Omega(\sigma^2)}^{-1}$; see (\ref{eq:gamma}) and (\ref{eqn:expressionomegasigma})  in Section~\ref{section:hierregressionmodel}. 
\item \textbf{Hamiltonian Monte Carlo (HMC)}: samples the entire vector $\boldsymbol{\theta} = \{B_i, G_j,\sigma_i^2\}$ through the non-U-turn HMC sampler~\citep{NUTS}, implemented with the \texttt{STAN} package. Here we give a brief description of HMC; see \citet{HMC} for more details.  Let $\pi(\boldsymbol{\theta})$ denote the (unnormalized) joint posterior $\boldsymbol{\theta}$, as given by (\ref{eqn:joint}). Define potential energy as $U(\boldsymbol{\theta})=-\log \pi(\boldsymbol{\theta})$ and kinetic energy as $k(\boldsymbol{p}) = \boldsymbol{p}^\top \mathcal{M}^{-1} \boldsymbol{p}$, where $\mathcal{M}$ is a symmetric positive-definite matrix, thus the total energy is $H(\boldsymbol{\theta},\boldsymbol{p}) = U(\boldsymbol{\theta})+k(\boldsymbol{p})$. We can obtain samples of $\pi(\boldsymbol{\theta})$ by sampling from the target density $\exp[-H(\boldsymbol{\theta},\boldsymbol{p})] \propto \pi(\boldsymbol{\theta}) \exp(-\boldsymbol{p}^\top \mathcal{M}^{-1} \boldsymbol{p})$, which is essentially a data-augmentation technique~\citep{tanner1987calculation}. By defining the potential energy and kinetic energy, we can propose MCMC moves according to the Hamiltonian dynamics, which explores the parameter space more efficiently by taking bigger and less correlated moves, as opposed to random walk Metropolis-Hastings or a Gibbs sampler. In practice, we use the leapfrog move to approximate the Hamiltonian dynamics. Due to the energy-preserving property of Hamiltonian dynamics, the acceptance rate of the resulting HMC is approximately $1$. It is not exactly $1$ because we use the (discretized) leapfrog moves to approximate (continuous) Hamiltonian dynamics. The tuning parameters of the HMC algorithm include the covariance matrix $\mathcal{M}$, the leapfrog step size $\epsilon$, and the number of leapfrog steps $L$. These are all self-tuned in the \texttt{STAN} package.
\end{enumerate}

We compare the performance of these three algorithms using auto-correlation plots of the posterior samples and the effective sample size, in both the simulated and real data examples. Not surprisingly, the Gibbs sampler converges very slowly relative to the other two algorithms. We are able to cross check our results by comparing the samples obtained with the block Gibbs sampler and HMC -- they give practically the same posterior distributions.

\section{Proprieties of the Posterior Distribution}

\subsection{Propriety of Posterior}
\label{section:properposterior}
\begin{theorem}
\label{theorempropergeneral}
Under the prior specifications for $\{B_i, G_j,\sigma_i^2: 1\leq i\leq N, 1\leq j\leq M\}$ given in (\ref{eqn:lognormalmodel}), the posterior is proper if each source is measured by at least one instrument, i.e., $|I_j| \geq 1$ for all $1\leq j\leq M$.
\end{theorem}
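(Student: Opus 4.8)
The plan is to bound the integral of the unnormalized joint posterior density of \eqref{eqn:joint}, call it $\pi(\boldsymbol{B},\boldsymbol{G},\boldsymbol{\sigma}^2)$, by integrating out the blocks of parameters in the order $\boldsymbol{G}$, then $\boldsymbol{B}$, then $\boldsymbol{\sigma}^2$, exhibiting a finite upper bound at each step. It is convenient to write $\pi=h(\boldsymbol{\sigma}^2)\,\phi(\boldsymbol{B})\,\exp\{-\tfrac12 Q\}$, where $h(\boldsymbol{\sigma}^2)=\prod_{i=1}^N\sigma_i^{-|J_i|-2-2\alpha}e^{-\beta/\sigma_i^2}$ is an unnormalized product of Inverse-Gamma densities, $\phi(\boldsymbol{B})=\prod_{i=1}^N e^{-(b_i-B_i)^2/(2\tau_i^2)}$ is an unnormalized Gaussian density, and $Q=\sum_{j=1}^M\sum_{i\in I_j}\sigma_i^{-2}(y_{ij}+\tfrac12\sigma_i^2-B_i-G_j)^2\ge 0$. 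In this form, the priors on $\boldsymbol{\sigma}^2$ and $\boldsymbol{B}$ already supply integrable kernels, so the only improper ingredient is the flat prior on $\boldsymbol{G}$; the condition $|I_j|\ge 1$ is exactly what lets the likelihood (through $Q$) compensate for it.

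First I would integrate out $\boldsymbol{G}$. For fixed $(\boldsymbol{B},\boldsymbol{\sigma}^2)$, $\exp\{-\tfrac12 Q\}$ factors over $j$, and completing the square in $G_j$ yields $\int_{\mathbb{R}}\exp\{-\tfrac12\sum_{i\in I_j}\sigma_i^{-2}(y_{ij}+\tfrac12\sigma_i^2-B_i-G_j)^2\}\,dG_j=\sqrt{2\pi/p_j}\,\exp\{-\tfrac12 r_j(\boldsymbol{B},\boldsymbol{\sigma}^2)\}\le\sqrt{2\pi/p_j}$, where $p_j(\boldsymbol{\sigma}^2)=\sum_{i\in I_j}\sigma_i^{-2}$ and $r_j\ge0$ is the weighted residual sum of squares. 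The hypothesis $|I_j|\ge 1$ enters precisely here, guaranteeing $p_j>0$ so that each $G_j$-integral is finite; if some $|I_j|=0$ the corresponding integral diverges, so the hypothesis is also necessary. Multiplying over $j$, $\int_{\mathbb{R}^M}\pi\,d\boldsymbol{G}\le h(\boldsymbol{\sigma}^2)\,\phi(\boldsymbol{B})\prod_{j=1}^M\sqrt{2\pi/p_j(\boldsymbol{\sigma}^2)}$.

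Integrating out $\boldsymbol{B}$ is then immediate: the bound depends on $\boldsymbol{B}$ only through $\phi$, and $\int_{\mathbb{R}^N}\phi(\boldsymbol{B})\,d\boldsymbol{B}=\prod_i\sqrt{2\pi\tau_i^2}<\infty$. It remains to show $\int_{(0,\infty)^N}h(\boldsymbol{\sigma}^2)\prod_{j=1}^M p_j(\boldsymbol{\sigma}^2)^{-1/2}\,d\boldsymbol{\sigma}^2<\infty$. Since $\int_0^\infty(\sigma^2)^{-a-1}e^{-\beta/\sigma^2}\,d(\sigma^2)=\beta^{-a}\Gamma(a)<\infty$ for every $a>0$, the factor $h$ is integrable on its own; the main obstacle is the product $\prod_j p_j^{-1/2}$, which blows up when the $\sigma_i^2$ with $i\in I_j$ all become large, so that $h$ must be shown to decay fast enough to absorb it.

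To handle this, fix once and for all an assignment $j\mapsto i(j)\in I_j$ (for instance $i(j)=\min I_j$), not depending on $\boldsymbol{\sigma}^2$. From $p_j\ge\sigma_{i(j)}^{-2}$ we get $p_j^{-1/2}\le\sigma_{i(j)}$, hence $\prod_{j=1}^M p_j^{-1/2}\le\prod_{i=1}^N\sigma_i^{\,n_i}$ where $n_i=|\{j:i(j)=i\}|$. Since $i(j)=i$ forces $j\in J_i$, we have $n_i\le|J_i|$, and $n_i=0$ whenever $|J_i|=0$. Therefore $h(\boldsymbol{\sigma}^2)\prod_j p_j^{-1/2}\le\prod_{i=1}^N(\sigma_i^2)^{-(|J_i|+2\alpha-n_i)/2-1}e^{-\beta/\sigma_i^2}$, a product of independent $\text{Inv-Gamma}\big((|J_i|+2\alpha-n_i)/2,\ \beta\big)$ kernels, each with strictly positive shape parameter because $\alpha>0$ and $n_i\le|J_i|$. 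Integrating coordinatewise gives a finite value, which proves the posterior is proper. Incidentally, this dominating inverse-Gamma product is the convenient rejection proposal alluded to in Section~\ref{section:hierregressionmodel}.
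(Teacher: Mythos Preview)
Your proof is correct and follows the same overall strategy as the paper's: integrate out $\boldsymbol{G}$, then $\boldsymbol{B}$, then $\boldsymbol{\sigma}^2$, and bound the result by a product of proper inverse-Gamma kernels. The technical details differ in two places, both of which are slight simplifications on your side. First, for the $G_j$-integral, the paper bounds the quadratic form via a Jensen-type argument (interpreting the precision-weighted sum as an expectation with respect to a random index $\mathcal{I}$ on $I_j$), whereas you compute the Gaussian integral exactly by completing the square; both yield the same factor $p_j^{-1/2}=(\sum_{i\in I_j}\sigma_i^{-2})^{-1/2}$. Second, to control $\prod_j p_j^{-1/2}$, the paper invokes the harmonic--geometric mean inequality, which spreads the resulting $\sigma_i$-powers evenly across $i\in I_j$ and gives the exponent $\sum_{j\in J_i}|I_j|^{-1}\le|J_i|$; you instead fix a single representative $i(j)\in I_j$ and use the one-term bound $p_j\ge\sigma_{i(j)}^{-2}$, obtaining exponent $n_i\le|J_i|$. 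Either route delivers inverse-Gamma shape parameters bounded below by $\alpha>0$. Your argument is more elementary; the paper's HM--GM bound produces a dominating density that is symmetric in the instruments and somewhat tighter, which matters if one actually intends to use it as the rejection-sampling proposal mentioned in Section~\ref{section:hierregressionmodel}.
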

\begin{proof} We prove the propriety of the posterior by first integrating out the $G_j$ first, then the $B_i$, and finally the $\sigma_i^2$. By (\ref{eqn:joint}), $p(\boldsymbol{B},\boldsymbol{G}, \boldsymbol{\sigma}^2 | \boldsymbol{D}, \boldsymbol{\tau}^2)$ is proportional to
\begin{equation}\label{eqn:joint1}
\prod_{i=1}^N \sigma_{i}^{-|J_i|-2-2 \alpha}\exp\left\{- \frac{1}{2}\sum_{j=1}^M\sum_{i\in I_j} \sigma^{-2}_{i} (y'_{ij}-B_i-G_j)^2 -\sum_{i=1}^N \left[\frac{(b_{i}-B_i)^2}
{2\tau^2_{i}} +\frac{\beta}{\sigma_i^2}\right]\right\}.
\end{equation}
Now for each $1\leq j\leq M$, if we define a random index ${\cal I}$ on $I_j$ such that $\Pr({\cal I}=i)\propto \sigma_i^{-2}$, then  
\begin{equation}
\frac{\sum_{i\in I_j} \sigma^{-2}_{i} (y'_{ij}-B_i-G_j)^2}{\sum_{i\in I_j} \sigma_i^{-2}} =
\E \left[y'_{{\cal I}j}-B_{\cal I}-G_j\right]^2 \ge  
\left[\E (y'_{{\cal I}j}-B_{\cal I})-G_j\right]^2. 
\end{equation}
Therefore, the first term in the exponential part of (\ref{eqn:joint1}) is less than $-0.5 \left(\sum_{i\in I_j}\sigma_i^{-2}\right)(G_j-C_j)^2$, where $C_j=\E (y'_{{\cal I}j}-B_{\cal I})$ is free of $G_j$. The property of Normal density (for $G_j$) then yields
\begin{equation*}
\int p(\boldsymbol{B},\boldsymbol{G},\boldsymbol{\sigma}^2 | \boldsymbol{D}, \boldsymbol{\tau}^2) \ d\boldsymbol{G} \leq C^* \prod_{i=1}^N \sigma_{i}^{-|J_i|-2-2 \alpha} \prod_{j=1}^M \left[ \sum_{i\in I_j} \sigma_i^{-2} \right]^{-1/2} \exp\left\{-\sum_{i=1}^N \left[\frac{(b_{i}-B_i)^2}
{2\tau^2_{i}} +\frac{\beta}{\sigma_i^2}\right]\right\}
\end{equation*}
where $C^{*}$ is a constant that depends only on $\boldsymbol{D,\tau^2}$. Integrating out $\boldsymbol{B}$ then gives
\begin{align}\label{eqn:intB}
\int \int p(\boldsymbol{B},\boldsymbol{G},\boldsymbol{\sigma}^2 | \boldsymbol{D}, \boldsymbol{\tau}^2) \ d\boldsymbol{G}\ d\boldsymbol{B} \leq C^{**} \prod_{i=1}^N \sigma_{i}^{-|J_i|-2-2 \alpha} \prod_{j=1}^M \left[ \sum_{i\in I_j} \sigma_i^{-2} \right]^{-1/2} \exp\left\{-\sum_{i=1}^N \frac{\beta}{\sigma_i^2}\right\}
\end{align}
where $C^{**}$ is a constant that depends only on $\boldsymbol{D,\tau^2}$. Since $I_j$ is non-empty, it is meaningful to invoke the well-known harmonic-geometric mean inequality to obtain that
\begin{equation}\label{eqn:hg}
\prod_{j=1}^{M}\left[ \sum_{i\in I_j} \sigma_i^{-2} \right]^{-1/2} \leq\prod_{j=1}^M |I_j|^{-1/2}\left[ \prod_{i\in I_j} \sigma_i \right]^{1/|I_j|} \leq
\prod_{i=1}^N \sigma_i^{\sum_{j\in J_i} |I_j|^{-1}}. 
\end{equation}
Inequalities (\ref{eqn:intB}) and (\ref{eqn:hg}) together imply that the unnormalized $p(\boldsymbol{\sigma}^2 | \boldsymbol{D}, \boldsymbol{\tau}^2)$ is dominated above by a constant times $\prod_{i=1}^N p_i(\sigma_i^2)$, where $p_i(x)$ is the density of the inverse Gamma distribution with shape parameter $\alpha_i = \alpha+ [|J_i|- \sum_{j\in J_i} |I_j|^{-1}]/2$ and scale parameter $\beta$. Because $|I_j|\ge 1$, we have $\alpha_i\ge \alpha$. Hence as long as the hyperparameter $\alpha>0$, which is always chosen to be so, $p_i$ is a proper density. Consequently, $p(\boldsymbol{\sigma}^2 | \boldsymbol{D}, \boldsymbol{\tau}^2)$ is a proper density after renormalization.

\end{proof}

\subsection{Identifiability}
\label{appendix:identifiability}

When $\tau_i^2$ is large, the likelihood information for estimating $B_i$ (i.e., from $c_{ij}$) dominates the prior information (i.e., from $b_i$). In the extreme case of $\tau_i^2 = \infty$, the model is not identifiable because for fixed variances, $\{{B}_i, {G}_j\}$ and $\{B_i + \delta, {G}_j -\delta\}$ yield the same posterior densities for $\{\boldsymbol{B},\boldsymbol{G}\}$ for any constant $\delta$. Let $\lambda_{\rm{max}}$ and $\lambda_{\rm{min}}$ be the maximum and minimum eigenvalues of $\boldsymbol{\Omega(\sigma^2)}$,  as defined in Section~\ref{section:hierregressionmodel}. Taking $u = (\mathbf{1}_N,\mathbf{1}_M)^{\top}$ and $v = (\mathbf{1}_N,-\mathbf{1}_M)^{\top}$, the \textit{condition number} of $\boldsymbol{\Omega(\sigma^2)}$ is
\begin{equation}
\frac{\lambda_{\rm{max}}}{\lambda_{\rm{min}}}\geq \frac{u^{\top}\boldsymbol{\Omega(\sigma^2)}\ u}{v^{\top}\boldsymbol{\Omega(\sigma^2)} \ v} =1+\frac{4\sum_{i=1}^N|J_i|\sigma^{-2}_{i}}{\sum_{i=1}^N\tau_i^{-2}},
\label{eqn:eigenvalue}
\end{equation}
where $\boldsymbol{1}_n$ denotes an $n\times 1$ vector of ones. As a consequence, when $\{\tau_i^2\}$ are generally larger than $\{\sigma_{i}^2\}$, the ratio in (\ref{eqn:eigenvalue}) can be large, and the posterior contours, determined by $\boldsymbol{\Omega} $, are elongated in one direction and narrow in another. This provides a guideline  that $\{\tau^2_i\}$ should not be set too large relative to $\{\sigma_i^2\}$ in practice, because large $\{\tau^2_i\}$ can lead to near model non-identifiability and consequently more costly computation.  A computationally cheaper way of dealing with possible model non-identifiability is to set one of the $\{B_i\}$ equal to a fixed value, which is equivalent to setting the corresponding $\tau_i=0$.  We experiment with this computationally cheap strategy in our  empirical evaluations, and find that it does not alter the results in substantive ways, but the resulting estimators for the Effective Areas are relative to some (arbitrarily) chosen values instead of in absolute terms/magnitudes. 

\section{Derivation of Conditional Covariance Matrix}
\label{appendix:derivationofomegainverse}

In this section, we give detailed derivations of $\boldsymbol{\Omega}^{-1}(\boldsymbol{\sigma}^2)$ when all instruments measure all sources. In this case, $W_i$ defined in ~\eqref{eq:weight} becomes $W_i = \frac{M \sigma_i^{-2}}{M\sigma_i^{-2} +\tau_{i}^{-2}}$, $1\leq i\leq N$. Define $\tilde{\sigma}^2 = \left(N^{-1}\sum_{i=1}^N \sigma_i^{-2}\right)^{-1}$.  

Let $\boldsymbol{A}$ be the $(N+M)\times (N+M)$ diagonal matrix with diagonal elements equal to those of $\boldsymbol{\Omega}(\sigma^2)$. Let $\boldsymbol{U}$ be an $(N+M)\times 2$ matrix such that $U_{i,1}= \sigma_i^{-2},\ U_{i,2} = 0 $ for $i=1, \ldots, N$, and $U_{j+N,1} = 0,\ U_{j+N,2}=1$ for $j=1,\ldots,M$. Let $\boldsymbol{C}$ be a $2\times 2$ matrix such that $C_{i,j}=I_{i\neq j}\ (i,j=1,2)$.
Then $\boldsymbol{\Omega}(\boldsymbol{\sigma}^2) = \boldsymbol{A} + \boldsymbol{U}\boldsymbol{C} \boldsymbol{U}^\top$. By the Woodbury matrix identity, we have
\begin{equation}
\label{equ:inversematrix}
\boldsymbol{\Omega}^{-1} (\boldsymbol{\sigma}^2) = \boldsymbol{A}^{-1} - \boldsymbol{A}^{-1}\boldsymbol{U}\left(\boldsymbol{C}+\boldsymbol{U}^\top\boldsymbol{A}^{-1}\boldsymbol{U}\right)^{-1}\boldsymbol{U}^\top\boldsymbol{A}^{-1},
\end{equation}
where $\boldsymbol{A}^{-1}$ is a diagonal matrix with diagonal elements $$\left(\left\{W_i\sigma_i^2 / M \right\}_{1\leq i\leq N}, \left\{\tilde{\sigma}^2/N\right\}_{1\leq j\leq M}\right).$$ 
Therefore, we can derive the inverse of $2\times 2$ matrix $\boldsymbol{C}+\boldsymbol{U}^\top\boldsymbol{A}^{-1}\boldsymbol{U}$ as 
\begin{align*}
& \left(\boldsymbol{C}+\boldsymbol{U}^\top\boldsymbol{A}^{-1}\boldsymbol{U}\right)^{-1} = \left(\begin{array}{cc}
\frac{\sum_{i=1}^N W_i\sigma_i^{-2}}{M} & 1\\
1 & \frac{M}{N}\tilde{\sigma}^2 
\end{array}\right)^{-1} =-\frac{\sum_{i=1}^N \sigma_i^{-2}}{\sum_{i=1}^N W_i\tau_i^{-2}}\left(\begin{array}{cc}\frac{M^2 \tilde{\sigma}^2}{N} 
 & -M\\
-M & \sum_{i=1}^N W_i\sigma_i^{-2}
\end{array}\right).
\end{align*}
Further, let $\boldsymbol{W}$ be the $N\times 1$ column vector with $i$th element $W_i$, then we have 
\begin{equation*}
\boldsymbol{A}^{-1}\boldsymbol{U} = \left(\begin{array}{cc}
\boldsymbol{W}/M & 0_{N\times 1}\\
0_{M\times 1} & \tilde{\sigma}^2 /N\  1_{M\times 1} 
\end{array}\right).
\end{equation*}
Consequently, $\boldsymbol{A}^{-1}\boldsymbol{U}\left(\boldsymbol{C}+\boldsymbol{U}^\top\boldsymbol{A}^{-1}\boldsymbol{U}\right)^{-1}\boldsymbol{U}^\top\boldsymbol{A}^{-1} $ is equal to
\begin{align*}
- \left(\sum_{i=1}^N \tau_i^{-2}W_i \right)^{-1} \left(\begin{array}{cc}
\boldsymbol{W} \boldsymbol{W}^\top & -\boldsymbol{W} 1_{1\times M}\\
-  1_{M\times 1} \boldsymbol{W}^\top  & [N^{-1}\tilde{\sigma}^2] \sum_{i=1}^N W_i\sigma_i^{-2}\  1_{M\times M}
\end{array}\right).
\end{align*}
Finally, we arrive at the closed-form expression for $\boldsymbol{\Omega}^{-1} (\boldsymbol{\sigma}^2)$:
\begin{align*}
\left(\boldsymbol{\Omega}^{-1} (\boldsymbol{\sigma}^2) \right)_{i,i}
&= \frac{1}{M\sigma_i^{-2}+\tau_i^{-2}}\left\{1 + \frac{M \sigma_i^{-2} W_i}{\sum_{u=1}^N {\tau_u^{-2} W_u} }\right\},\\
\left(\boldsymbol{\Omega}^{-1} (\boldsymbol{\sigma}^2) \right)_{i,j+N} &= \left(\boldsymbol{\Omega}^{-1} (\boldsymbol{\sigma}^2) \right)_{j+N,i} =  - W_i \left(\sum_{u=1}^N \tau_u^{-2} W_u\right)^{-1},\\
\left(\boldsymbol{\Omega}^{-1} (\boldsymbol{\sigma}^2) \right)_{j+N,j+N} 
 &= \left(\sum_{u=1}^N \sigma_u^{-2}\right)^{-1}\left\{1 + \frac{\sum_{i=1}^N W_i\ \sigma_i^{-2}}{\sum_{u=1}^N W_u\ \tau_u^{-2} }  \right\}.
\end{align*}

\section{Likelihood Method}
\label{section:frequentistmethod}

\subsection{MLEs and Their Asymptotic Variances}
\label{appendix:mlecompute}

Note that the variance-covariance matrix of the MLEs $\{\hat{\boldsymbol{B}}, \hat{\boldsymbol{G}}\}$ is in fact $\boldsymbol{\Omega}^{-1}(\boldsymbol{\sigma}^2)$ as defined in~\eqref{eqn:expressionomegasigma}. Therefore, we have the following proposition.

\begin{proposition}
\label{proposition:variance_BG_known_var}
 If all detectors measure all objects, i.e., $J_i = \{1,\ldots,M\}$, $I_j=\{1,\ldots,N\}$ and $\{\sigma_i^2, \tau_i^2\}$ are known constants, then the variances of $\{\hat{B}_i\}$, $\{\hat{G}_j\}$ are given by 
\begin{equation}
\label{eqn:var_BG_known_var}
\V(\hat{G}_j)  = \left[\sum_{i=1}^N \sigma_i^{-2}\right]^{-1} \ \mathcal{S}_G,\quad 
\V(\hat{B}_i) = \left[M\sigma_i^{-2} + {\tau}_i^{-2}\right]^{-1} \ \mathcal{S}_{B}^{(i)},
\end{equation}
where the inflation factors $\mathcal{S}_G, \{\mathcal{S}_B^{(i)}\}$ are given by
\begin{align*}
\mathcal{S}_{G}&= 1+\frac{\sum_{i=1}^{N} \sigma_i^{-2}W_i}{\sum_{i=1}^N \tau_i^{-2} W_i},\quad
\mathcal{S}_{B}^{(i)} = 1+ \frac{M \sigma_i^{-2} W_i}{\sum_{u=1}^N \tau_u^{-2} W_u}.
\end{align*}
Moreover, we have ${\rm Cov} (\hat{B}_i, \hat{G}_j) = - W_i \left[\sum_{k=1}^N \tau_k^{-2} W_k \right]^{-1}$.
\end{proposition}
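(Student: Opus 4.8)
The plan is to recognize that, with $\{\sigma_i^2,\tau_i^2\}$ known, the likelihood model is an ordinary Gaussian linear model, so that the MLE of $\boldsymbol{\theta}=(\boldsymbol{B}^\top,\boldsymbol{G}^\top)^\top$ is a generalized least squares estimator with \emph{exact} covariance $\boldsymbol{\Omega}^{-1}(\boldsymbol{\sigma}^2)$, and then simply to read off the entries of $\boldsymbol{\Omega}^{-1}(\boldsymbol{\sigma}^2)$ from the closed-form computation in Appendix~\ref{appendix:derivationofomegainverse}.

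Concretely, I would stack the $NM$ shifted observations $y_{ij}'=y_{ij}+0.5\sigma_i^2=B_i+G_j+e_{ij}$ with the $N$ pseudo-observations $b_i\sim\mathcal{N}(B_i,\tau_i^2)$ into a single response vector, with design matrix $\boldsymbol{X}$ encoding $B_i+G_j$ on the first block and $B_i$ on the second, and diagonal error covariance $\boldsymbol{\Sigma}$ whose entries are $\sigma_i^2$ (for the $M$ rows belonging to instrument $i$) and $\tau_i^2$ (for the $i$th pseudo-observation). Because the half-variance term $-0.5\sigma_i^2$ is a known constant absorbed into $y_{ij}'$, it drops out of the derivative of the score and hence plays no role in the covariance. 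The MLE is then $\hat{\boldsymbol{\theta}}=(\boldsymbol{X}^\top\boldsymbol{\Sigma}^{-1}\boldsymbol{X})^{-1}\boldsymbol{X}^\top\boldsymbol{\Sigma}^{-1}\boldsymbol{y}$, which is exactly linear in $\boldsymbol{y}$ and therefore has exact covariance $(\boldsymbol{X}^\top\boldsymbol{\Sigma}^{-1}\boldsymbol{X})^{-1}$. Computing $\boldsymbol{X}^\top\boldsymbol{\Sigma}^{-1}\boldsymbol{X}$ block by block (the $\boldsymbol{B}$-block is diagonal with entries $M\sigma_i^{-2}+\tau_i^{-2}$, the $\boldsymbol{G}$-block is diagonal with entries $\sum_i\sigma_i^{-2}$, and the cross-block has $(i,j)$ entry $\sigma_i^{-2}$) shows it equals $\boldsymbol{\Omega}(\boldsymbol{\sigma}^2)$ of \eqref{eqn:expressionomegasigma} specialized to $|J_i|=M$, $I_j=\{1,\dots,N\}$; this is unsurprising since \eqref{eq:normal} defined $\boldsymbol{\Omega}$ as the negative $\boldsymbol{\theta}$-Hessian of a log density whose quadratic part is precisely the Gaussian log-likelihood here. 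Finiteness of every $\tau_i^2$ breaks the $(\boldsymbol{B},\boldsymbol{G})\mapsto(\boldsymbol{B}+\delta\boldsymbol{1},\boldsymbol{G}-\delta\boldsymbol{1})$ invariance, so $\boldsymbol{\Omega}(\boldsymbol{\sigma}^2)$ is invertible and both the estimator and its covariance are well defined.

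It remains to extract the diagonal and cross terms of $\boldsymbol{\Omega}^{-1}(\boldsymbol{\sigma}^2)$. When every instrument observes every source, $\boldsymbol{\Omega}(\boldsymbol{\sigma}^2)$ has the ``diagonal plus rank-two'' form $\boldsymbol{A}+\boldsymbol{U}\boldsymbol{C}\boldsymbol{U}^\top$ exploited in Appendix~\ref{appendix:derivationofomegainverse}, and the Woodbury identity \eqref{equ:inversematrix} gives the closed-form entries there. Matching $(\boldsymbol{\Omega}^{-1})_{i,i}$, $(\boldsymbol{\Omega}^{-1})_{j+N,j+N}$, and $(\boldsymbol{\Omega}^{-1})_{i,j+N}$ against $\V(\hat{B}_i)$, $\V(\hat{G}_j)$, and $\mathrm{Cov}(\hat{B}_i,\hat{G}_j)$ yields the stated expressions, with $\mathcal{S}_B^{(i)}$ and $\mathcal{S}_G$ appearing verbatim as the bracketed inflation factors in that appendix. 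The one step that needs care is the bookkeeping inside the Woodbury inversion, specifically simplifying $\det(\boldsymbol{C}+\boldsymbol{U}^\top\boldsymbol{A}^{-1}\boldsymbol{U})$ via the identity $(1-W_i)\sigma_i^{-2}=M^{-1}W_i\tau_i^{-2}$ (immediate from the definition \eqref{eq:weight} of $W_i$ with $|J_i|=M$); this is what turns $\sum_i\sigma_i^{-2}-\sum_i W_i\sigma_i^{-2}$ into $M^{-1}\sum_i W_i\tau_i^{-2}$ and thereby produces the denominator $\sum_k\tau_k^{-2}W_k$ in all three formulas. Everything else is routine once the GLS covariance has been identified with $\boldsymbol{\Omega}^{-1}(\boldsymbol{\sigma}^2)$.
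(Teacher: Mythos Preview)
Your proposal is correct and follows essentially the same approach as the paper: the paper simply notes (in the sentence preceding the proposition) that the variance-covariance matrix of the MLEs equals $\boldsymbol{\Omega}^{-1}(\boldsymbol{\sigma}^2)$ and then reads off the entries from the Woodbury computation in Appendix~\ref{appendix:derivationofomegainverse}. Your write-up is actually more explicit than the paper's, spelling out the GLS construction and the identity $(1-W_i)\sigma_i^{-2}=M^{-1}W_i\tau_i^{-2}$ needed to simplify the Woodbury determinant, both of which the paper leaves implicit.
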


\begin{remark}
Under the additive model, $B_i$ and $G_j$ are negatively correlated for all $i, j$. The asymptotic variances of $\hat{B}_i$ and $\hat{G}_j$ can be written as 
 \[\V(\hat{G}_j) = \V(\tilde{G}_j) \mathcal{S}_G,\quad \V(\hat{B}_i) = \V(\tilde{B}_i) \mathcal{S}_B^{(i)},\]
where $\V(\tilde{G}_j) = \left[\sum_{i=1}^N \sigma_i^{-2}\right]^{-1}$ is the inverse precision, i.e., asymptotic covariance, of $\hat{G}_j$ when the $B_i$ are known constants; $\V(\tilde{B}_i) = [M \sigma_i^{-2} + \tau_i^{-2}]^{-1}$ is the inverse precision, i.e., asymptotic covariance, of $\hat{B}_i$ when the $G_j$ are known constants. The inflation factors $\mathcal{S}_G$ and $\mathcal{S}_B^{(i)}$ adjust for the fact that none of the $B_i$ or the $G_j$ are known.
\end{remark}

\newtheorem{corollary}{Corollary}
Proposition~\ref{proposition:variance_BG_known_var} directly yields the following asymptotic results as $N, M\rightarrow\infty$.
\begin{corollary}
\label{coro:asymptoticvarianceMLE}
If $\{\sigma_i/\tau_i\}$ are uniformly bounded from below and above by finite positive constants, and $\sum_{i=1}^N \sigma_i^{-2} / N$ converges to a positive constant as $N\rightarrow\infty$, then for all $i, j$, as $N, M\rightarrow\infty$,
\begin{equation*}
\V(\hat{G}_j) =O(N^{-1}),\  \V(\hat{B}_i) = O(N^{-1}+M^{-1}),\  {\rm Cov}(\hat{B}_i,\hat{G}_j) = - O(N^{-1}). 
\end{equation*}
Specifically, when $\tau=\tau_1=\cdots=\tau_N$ and $\sigma=\sigma_1=\cdots=\sigma_N$, \eqref{eqn:var_BG_known_var} simplifies to
\begin{equation*}
\V(\hat{G}_j) = \frac{\sigma^2}{N},\  \V(\hat{B}_i) = \frac{1}{M\sigma^{-2}+\tau^{-2}} \left(1+\frac{M\sigma^{-2}}{N\tau^{-2}}\right), \ {\rm Cov}(\hat{B}_i, \hat{G}_j) = -\frac{\tau^2}{N}.
\end{equation*}
\end{corollary}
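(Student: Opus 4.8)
The plan is to treat Corollary~\ref{coro:asymptoticvarianceMLE} as a direct bookkeeping consequence of Proposition~\ref{proposition:variance_BG_known_var}, which already writes each of $\V(\hat G_j)$, $\V(\hat B_i)$ and ${\rm Cov}(\hat B_i,\hat G_j)$ as an elementary precision term times an inflation factor ($\mathcal{S}_G$ or $\mathcal{S}_B^{(i)}$). All that remains is to show that, under the stated conditions, each elementary precision has the advertised order and each inflation factor is $O(1)$ in the relevant sense. First I would record the consequences of the hypotheses. Writing $\rho_i=\sigma_i/\tau_i$, the bound $0<c\le\rho_i\le C<\infty$ together with $\tau_i^{-2}=\rho_i^2\sigma_i^{-2}$ gives $c^2\sigma_i^{-2}\le\tau_i^{-2}\le C^2\sigma_i^{-2}$, so $\sum_i\tau_i^{-2}\asymp\sum_i\sigma_i^{-2}$; combined with $N^{-1}\sum_i\sigma_i^{-2}\to L>0$ this yields $\sum_i\sigma_i^{-2}\asymp N$ and $\sum_i\tau_i^{-2}\asymp N$. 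Moreover $W_i=M\sigma_i^{-2}/(M\sigma_i^{-2}+\tau_i^{-2})=M/(M+\rho_i^2)$ with $\rho_i^2\in[c^2,C^2]$, so $M/(M+C^2)\le W_i\le 1$: the weights converge to $1$ uniformly in $i$, and in particular stay bounded away from $0$ for large $M$, whence $\sum_i\sigma_i^{-2}W_i\asymp N$ and $\sum_i\tau_i^{-2}W_i\asymp N$ as well.

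With these facts the three estimates follow quickly. From $\tau_i^{-2}W_i\ge c^2\sigma_i^{-2}W_i$ I get $\mathcal{S}_G=1+(\sum_i\sigma_i^{-2}W_i)/(\sum_i\tau_i^{-2}W_i)\le 1+c^{-2}=O(1)$ (and $\mathcal{S}_G\ge 1$), so $\V(\hat G_j)=[\sum_i\sigma_i^{-2}]^{-1}\mathcal{S}_G=\Theta(N^{-1})$. For $\V(\hat B_i)$ I would split the inflation factor, $\V(\hat B_i)=(M\sigma_i^{-2}+\tau_i^{-2})^{-1}+M\sigma_i^{-2}W_i\,[(M\sigma_i^{-2}+\tau_i^{-2})\sum_u\tau_u^{-2}W_u]^{-1}$; the second term is at most $(\sum_u\tau_u^{-2}W_u)^{-1}=O(N^{-1})$ since $M\sigma_i^{-2}W_i\le M\sigma_i^{-2}+\tau_i^{-2}$, while the first term is at most $\sigma_i^2/M=O(M^{-1})$ (using that the $\sigma_i^2$ are bounded above), so $\V(\hat B_i)=O(N^{-1}+M^{-1})$. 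Finally $|{\rm Cov}(\hat B_i,\hat G_j)|=W_i(\sum_k\tau_k^{-2}W_k)^{-1}\le(\sum_k\tau_k^{-2}W_k)^{-1}=O(N^{-1})$, and since the covariance is negative this gives ${\rm Cov}(\hat B_i,\hat G_j)=-O(N^{-1})$.

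For the homogeneous special case $\sigma\equiv\sigma_i$, $\tau\equiv\tau_i$, I would substitute directly into Proposition~\ref{proposition:variance_BG_known_var}: the common weight $W$ cancels from both inflation factors, leaving $\mathcal{S}_G=1+\tau^2/\sigma^2$ and $\mathcal{S}_B^{(i)}=1+M\sigma^{-2}/(N\tau^{-2})$, so that $\V(\hat B_i)=(M\sigma^{-2}+\tau^{-2})^{-1}(1+M\sigma^{-2}/(N\tau^{-2}))$ and ${\rm Cov}(\hat B_i,\hat G_j)=-\tau^2/N$. It is worth carrying the $\mathcal{S}_G$ factor through explicitly here, since $\V(\hat G_j)$ equals $\mathcal{S}_G$ times the ``known-$B$'' precision $\sigma^2/N$, i.e.\ the homogeneous value is $(\sigma^2+\tau^2)/N$.

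I do not anticipate a genuine obstacle; the one place that repays care is checking that the inflation factors are uniformly $O(1)$. That is exactly where the two-sided bound on $\sigma_i/\tau_i$ enters (it keeps $W_i$ from degenerating to $0$ and bounds $\tau_i^{-2}/\sigma_i^{-2}$, so the ratios defining $\mathcal{S}_G$ and $\mathcal{S}_B^{(i)}$ stay controlled), and where a uniform upper bound on $\sigma_i^2$ is needed for the $O(M^{-1})$ piece of $\V(\hat B_i)$. One should also be explicit about which indices are held fixed and which grow in the double limit $N,M\to\infty$, so that the $O(\cdot)$ constants are genuinely uniform over the stated ranges of $i$ and $j$.
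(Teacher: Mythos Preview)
Your approach is exactly what the paper intends: it offers no proof beyond the sentence ``Proposition~\ref{proposition:variance_BG_known_var} directly yields the following asymptotic results,'' so your careful bookkeeping from the explicit variance and inflation-factor formulas is precisely the omitted verification. The order arguments for $\V(\hat G_j)$, $\V(\hat B_i)$, and ${\rm Cov}(\hat B_i,\hat G_j)$ are correct, and your remark about needing the two-sided bound on $\sigma_i/\tau_i$ to keep the $W_i$ and the ratios in $\mathcal{S}_G,\mathcal{S}_B^{(i)}$ under control is on point.

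One thing worth flagging explicitly rather than parenthetically: your computation of the homogeneous case gives $\V(\hat G_j)=(\sigma^2+\tau^2)/N$, not the $\sigma^2/N$ stated in the corollary. You are right and the paper is wrong here. Substituting $\sigma_i\equiv\sigma$, $\tau_i\equiv\tau$ into the Appendix~\ref{appendix:derivationofomegainverse} formula for $(\boldsymbol\Omega^{-1})_{j+N,j+N}$ (equivalently into $\mathcal{S}_G$) gives $\mathcal{S}_G=1+\tau^2/\sigma^2$ and hence $\V(\hat G_j)=(\sigma^2/N)(1+\tau^2/\sigma^2)=(\sigma^2+\tau^2)/N$. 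The paper's $\sigma^2/N$ is the ``known-$B$'' precision and omits the inflation factor; the other two homogeneous formulas (for $\V(\hat B_i)$ and ${\rm Cov}(\hat B_i,\hat G_j)$) check out. This does not affect the $O(N^{-1})$ asymptotic claim, but it is a genuine slip in the stated closed form.
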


\begin{remark}
The results above show that the asymptotic variances for $\{B_i\}$ and $\{G_j\}$ are not `exchangeable' (i.e., switching $\boldsymbol{B}$ and $\boldsymbol{G}$ and correspondingly $N$ and $M$), mainly for three reasons: first, for each $B_i$ we assign an informative prior $\mathcal{N}(b_i,\tau_i^2)$ whereas for each $G_j$ we assign a flat prior on the real line; second, for each instrument $i$, besides $B_i$, we also need to estimate $\sigma_{i}^2$; last, the measurement uncertainty depends only on the instrument but not on the sources (recall that $\sigma_{ij}^2=\sigma_i^2$ for all $i, j$). 
\end{remark}

\subsection{Goodness-of-fit}
\label{subsec:goodnessoffitregression}

We now give a goodness-of-fit test statistic for the random-effect regression model. Under the model (\ref{eqn:lognormalmodel}), we have the following normalized residual sum of squares:
\begin{equation}
\label{eqn:defnormalizedsse}
T(\boldsymbol{B}, \boldsymbol{G}):= \sum_{i=1}^N \frac{(b_i - B_i)^2}{\tau_i^2} + \sum_{i=1}^N\sum_{j=1}^M \frac{\left(y_{ij}' - B_i - G_j\right)^2}{\sigma_{i}^2}.
\end{equation}
We see this sum of squares has two parts. The first part involves $\{b_i\}$ only, measuring how good the prior means are relative to the prior variances $\{\tau_i\}$.
The second part depends on $\{y_{ij}\}$ only, and it will allow us to access how good the fitted $\boldsymbol{B, G}$ are relative to the sampling variances $\boldsymbol{\sigma}^2$. Here we put them together as an overall model check, but one can certainly use them individually if one wants to check the prior distribution and likelihood model separately.

\begin{theorem}
\label{theorem:RSS}
When the variances $\sigma_{i}^2,\tau_i^2$ are known and we insert the MLEs of $B_i$ and $G_j$ into (\ref{eqn:defnormalizedsse}), we obtain
$T(\hat{\boldsymbol{B}}, \hat{\boldsymbol{G}}) \sim \chi^2_{NM-M}.$
\end{theorem}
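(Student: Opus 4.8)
The plan is to recognize $T(\boldsymbol{B},\boldsymbol{G})$ as the weighted residual sum of squares of a Gaussian linear model and then apply the classical projection (Cochran-type) argument. First I would stack all the Gaussian quantities into one response vector $\boldsymbol{Z} = (\{y_{ij}'\}_{i,j},\ \{b_i\}_i)^\top \in \mathbb{R}^{NM+N}$. Under \eqref{eqn:lognormalmodel} together with the likelihood reading $b_i \stackrel{\rm indep}{\sim} \mathcal{N}(B_i,\tau_i^2)$, independent of the $y_{ij}'$, the vector $\boldsymbol{Z}$ is multivariate Normal with mean $\boldsymbol{X}\boldsymbol{\theta}$, where $\boldsymbol{\theta} = (\boldsymbol{B}^\top,\boldsymbol{G}^\top)^\top$ and $\boldsymbol{X}$ is the $(NM+N)\times(N+M)$ design matrix whose row for $y_{ij}'$ has a $1$ in the $B_i$ and the $G_j$ coordinates and whose row for $b_i$ has a $1$ only in the $B_i$ coordinate, and with diagonal covariance $\boldsymbol{\Sigma} = {\rm Diag}(\{\sigma_i^2\}_{i,j},\ \{\tau_i^2\}_i)$. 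With this notation $T(\boldsymbol{B},\boldsymbol{G}) = (\boldsymbol{Z}-\boldsymbol{X}\boldsymbol{\theta})^\top \boldsymbol{\Sigma}^{-1}(\boldsymbol{Z}-\boldsymbol{X}\boldsymbol{\theta})$, and since the variances are known the log-likelihood equals $-\tfrac12 T(\boldsymbol{B},\boldsymbol{G})$ up to an additive constant; hence the MLE coincides with the weighted least squares estimator and $T(\hat{\boldsymbol{B}},\hat{\boldsymbol{G}}) = \min_{\boldsymbol{\theta}} T(\boldsymbol{B},\boldsymbol{G})$.

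Next I would whiten: set $\boldsymbol{W} = \boldsymbol{\Sigma}^{-1/2}\boldsymbol{Z} \sim \mathcal{N}(\tilde{\boldsymbol{X}}\boldsymbol{\theta},\ \boldsymbol{I})$ with $\tilde{\boldsymbol{X}} = \boldsymbol{\Sigma}^{-1/2}\boldsymbol{X}$. Then $T(\hat{\boldsymbol{B}},\hat{\boldsymbol{G}}) = \|(\boldsymbol{I}-\boldsymbol{P})\boldsymbol{W}\|^2$, where $\boldsymbol{P}$ is the orthogonal projector onto the column space of $\tilde{\boldsymbol{X}}$. Because $(\boldsymbol{I}-\boldsymbol{P})\tilde{\boldsymbol{X}}\boldsymbol{\theta} = \boldsymbol{0}$, we have $(\boldsymbol{I}-\boldsymbol{P})\boldsymbol{W} \sim \mathcal{N}(\boldsymbol{0},\ \boldsymbol{I}-\boldsymbol{P})$, and since $\boldsymbol{I}-\boldsymbol{P}$ is idempotent, $\|(\boldsymbol{I}-\boldsymbol{P})\boldsymbol{W}\|^2 \sim \chi^2_{r}$ with $r = {\rm rank}(\boldsymbol{I}-\boldsymbol{P}) = (NM+N) - {\rm rank}(\tilde{\boldsymbol{X}})$.

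It then remains only to verify the rank, which is the one place a genuine (though short) argument is needed: I claim ${\rm rank}(\tilde{\boldsymbol{X}}) = {\rm rank}(\boldsymbol{X}) = N+M$. Invertibility of $\boldsymbol{\Sigma}^{-1/2}$ gives the first equality; for the second, suppose $\boldsymbol{X}\boldsymbol{c} = \boldsymbol{0}$ with $\boldsymbol{c} = (\boldsymbol{c}^B,\boldsymbol{c}^G)$. The rows in the $b_i$ block force $c^B_i = 0$ for every $i$, and then the rows for $y_{ij}'$ give $c^B_i + c^G_j = 0$, hence $c^G_j = 0$ for every $j$. Thus $\boldsymbol{X}$ has full column rank $N+M$, so $r = NM + N - (N+M) = NM - M$, which is exactly the asserted number of degrees of freedom. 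I expect the only ``obstacle'' here to be expository rather than mathematical: the content is standard Gaussian-linear-model theory, and the single point worth emphasizing is that the $N$ pseudo-observations $b_i$ are precisely what removes the $\{B_i+\delta,\,G_j-\delta\}$ degeneracy noted in Appendix~\ref{appendix:identifiability}, so the design is full rank and the naive count $n-p = (NM+N)-(N+M)$ is the correct degrees of freedom.
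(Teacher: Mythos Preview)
Your proposal is correct and follows essentially the same approach as the paper, which simply invokes standard results on the residual sum of squares for Gaussian linear regression and computes the degrees of freedom as $(NM+N)$ observations minus $(N+M)$ parameters. Your treatment is more detailed---in particular, you explicitly verify full column rank of the design matrix via the $b_i$ block, a point the paper leaves implicit.
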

\begin{proof} This conclusion regarding $\chi^2$ distribution follows from standard results on residual sum of squares of linear regression with Gaussian error. To figure out the correct degrees of freedom, we have ($NM+N$) independent observations in total, but  with  $N+M$ parameters. Therefore, the degrees of freedom for the residual sum of squares is $NM-M$. 
\end{proof}

With unknown variances we do not have a closed-form distribution of $T$ as defined in formula~(\ref{eqn:defnormalizedsse}). Heuristically, we invoke  the standard large-sample arguments and to continuously use the $\chi^2$ approximation, but reduce the degrees of freedom to $MN-M-N$ to count for the number of estimated variance parameter $\{\sigma_{i}^2\}$. The resulting p-values of the fitted data in Sections~\ref{section:2xmmhmsdata} and~\ref{section:hmsdata} are not significant.

\section{More Simulation Results Under Misspecified Models}
\label{appendix:robustness_simulation}

In Simulations~\upperRomannumeral{4} and~\upperRomannumeral{5}, we generate data as $c_{ij} =\lambda_{ij} X_{ij}$, where  $X_{ij}\sim \text{Poisson}(A_i F_j)$, and independently $\lambda_{ij}\sim \text{Uniform}[0.8, 1.2]$ for Simulation~\upperRomannumeral{4} and $\lambda_{ij}\sim \text{Uniform}[0.4, 1.6]$ for Simulation~\upperRomannumeral{5}. In Simulations~\upperRomannumeral{6} and~\upperRomannumeral{7}, we generate data from 
$c_{ij} \sim \text{Poisson}(\lambda_{ij} A_i F_j),$
where the $\lambda_{ij}$ are randomly generated from the uniform distribution on $[0.8, 1.2]$. The other parameters are set to be the same as in Simulation~\upperRomannumeral{2} except that $\beta=0.01$ for these simulations. Simulations~\upperRomannumeral{6} and~\upperRomannumeral{7} resemble the cases where the true model is Poisson and the estimation of 
 $T_{ij}$ is volatile, whereas Simulations~\upperRomannumeral{4} and~\upperRomannumeral{5} resemble the cases that happen in practice, where the photon counts are multiplied by an adjustment factor, such as $\hat T_{ij}^{-1}$, as with the data pre-processing step for the XCAL data.

\begin{figure}[tbph]
\centering
\includegraphics[width=0.9\textwidth]{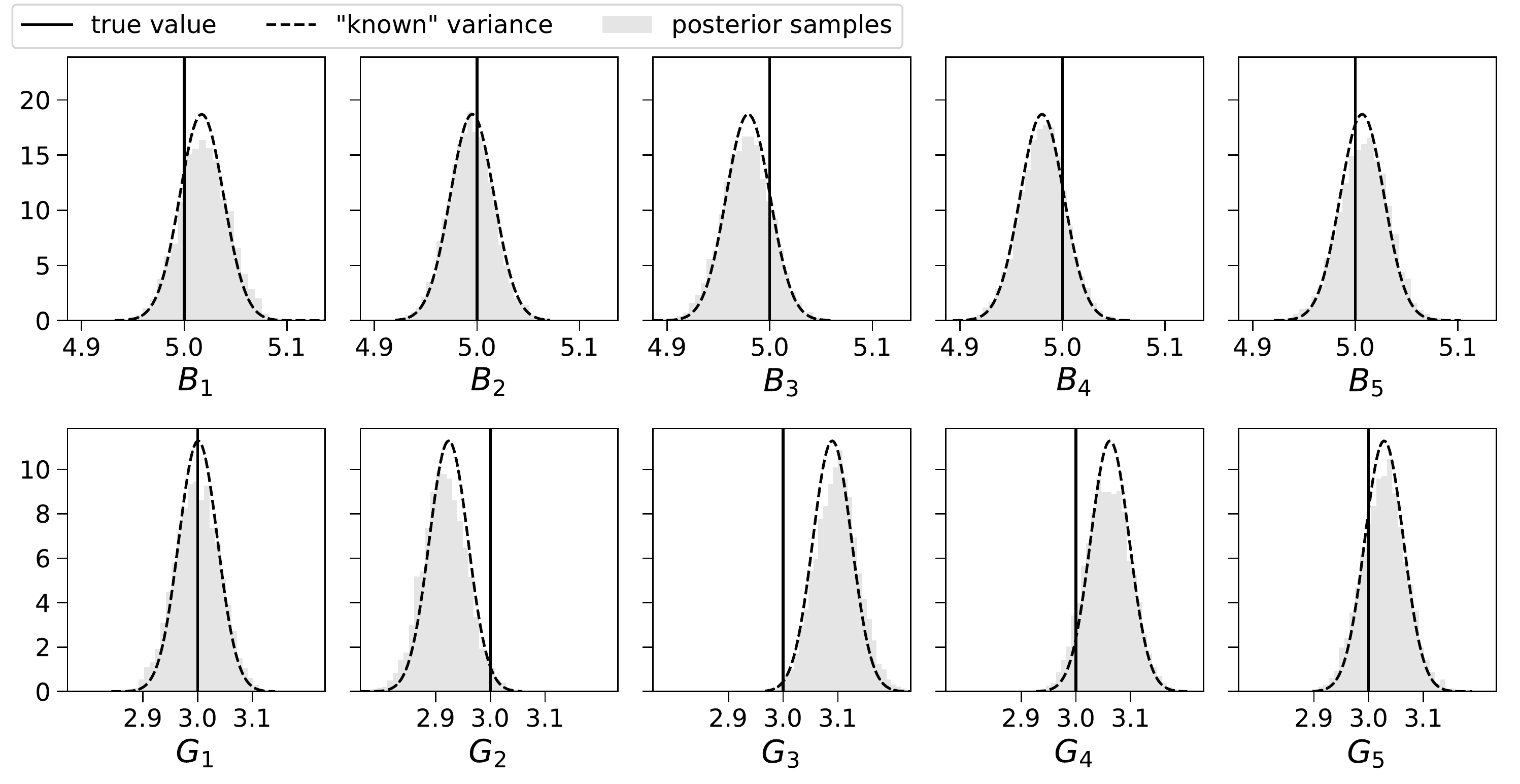}\\
\includegraphics[width=0.9\textwidth]{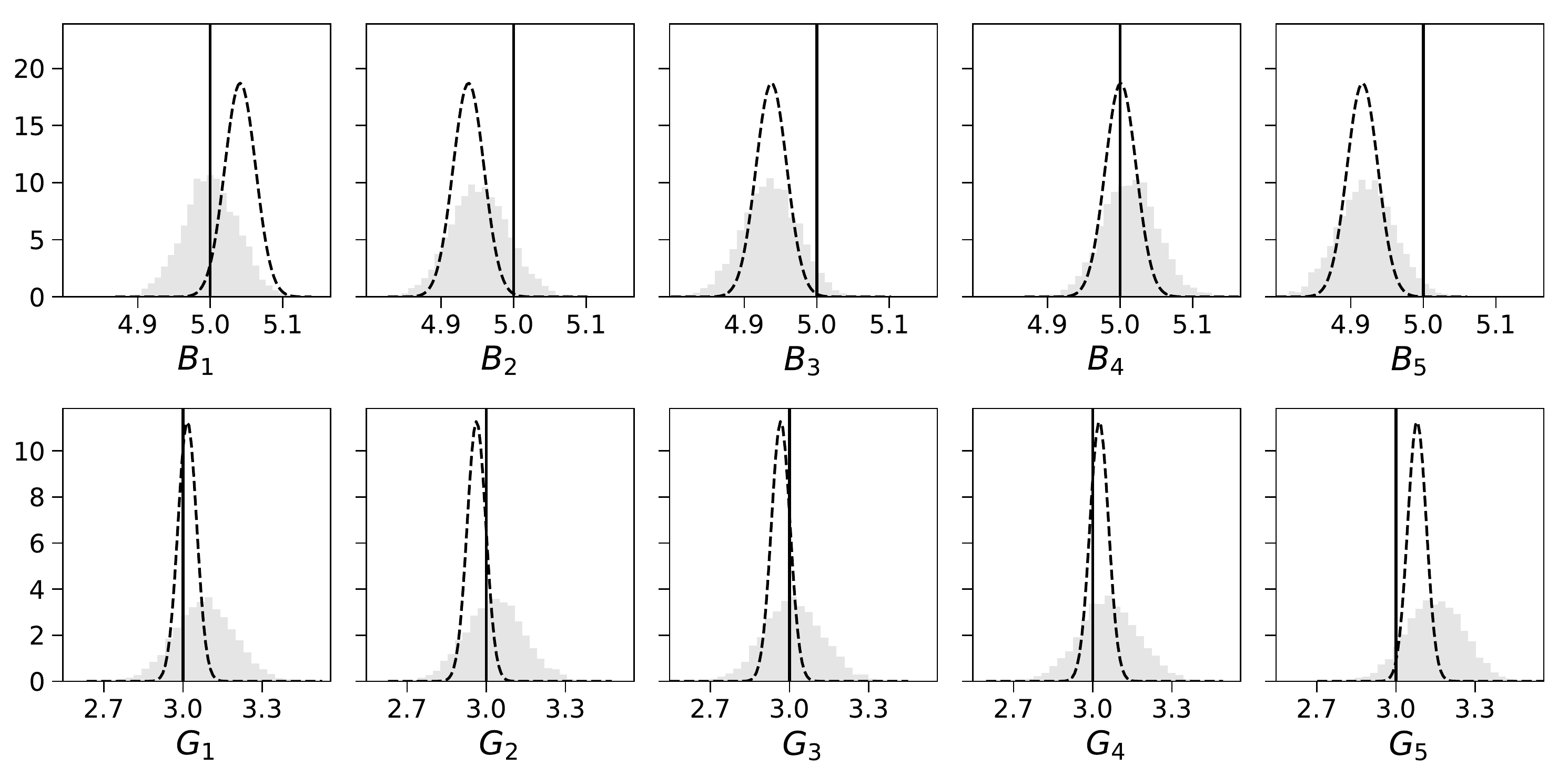}\caption{Simulations~\upperRomannumeral{4} (rows 1 \& 2) and~\upperRomannumeral{5} (rows 3 \& 4). The gray shades are the posterior distributions of $\{B_i\}_{i=1}^5$ (rows 1 \& 3) and $\{G_j\}_{j=1}^5$ (rows 2 \& 4) fitted with unknown variances. The solid vertical black lines denote the true values. The black dashed density curves on top of the histograms denote the true posterior densities of  
$\{B_i\}$ and the $\{G_j\}$ with `known' variances $\sigma_i^2=0.1^2$.}
\label{fig:simmisspecifiedpoissonconst}
\end{figure}

Figure~\ref{fig:simmisspecifiedpoissonconst} gives the results of Simulations~\upperRomannumeral{4} and~\upperRomannumeral{5}. Figure~\ref{fig:simmisspecifiedpoisson} gives the results of Simulation~\upperRomannumeral{6} with smaller counts ($B_i=1$ and $G_j=3$) and~\upperRomannumeral{7} with larger counts ($B_i=5$ and $G_j=3$) under this scenario. It shows with large Poisson counts, controlling the uncertainty in the multiplicative constant can possibly lead to reasonably good results. Thus, even with compounded model misspecification, the log-Normal model is able to provide reasonable, though not as precise, results, as compared with the correctly-specified case. However, when the misspecified ``known constant'' is highly variable, the fit result is not as satisfactory; plugging in a ``guesstimated'' $\sigma_i$ in this case can give disastrously optimistic but biased results.

\begin{figure}[tbph]
\centering
\includegraphics[width = 0.85\textwidth]{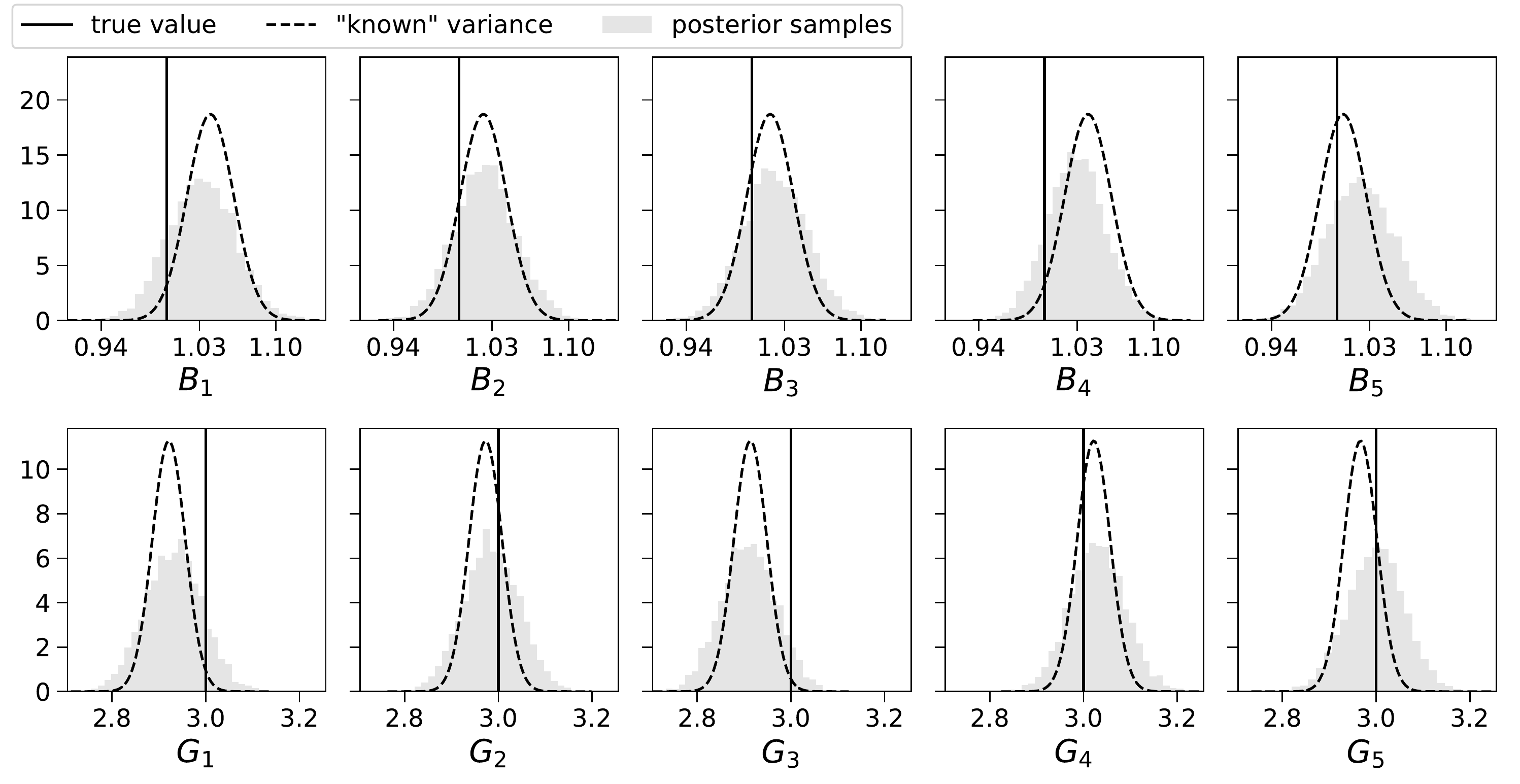}\\
\includegraphics[width = 0.85\textwidth]{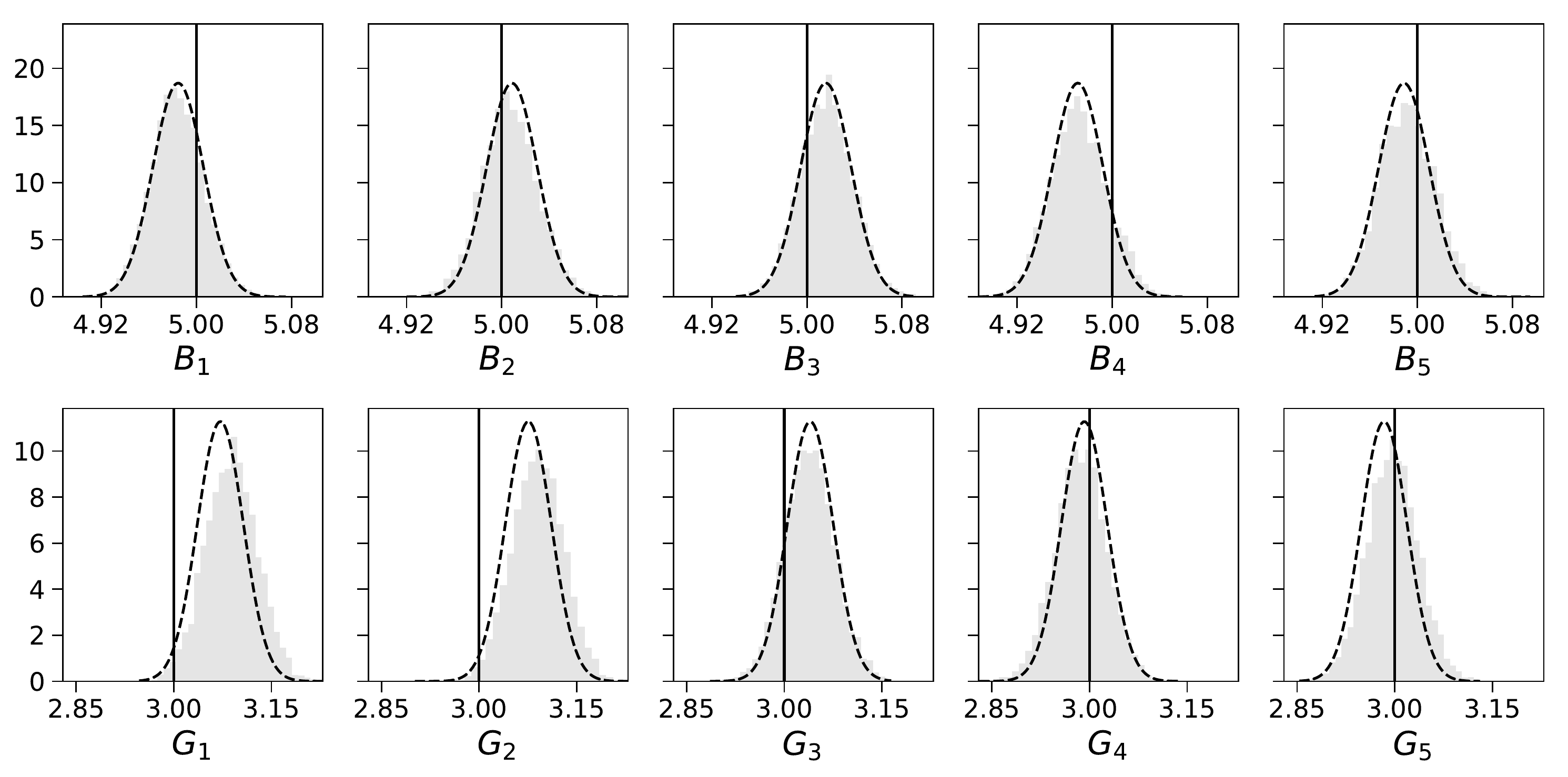}
\caption{Same as Figure~\ref{fig:simmisspecifiedpoissonconst} but with Simulations~\upperRomannumeral{6} (rows 1 \& 2) and~\upperRomannumeral{7} (rows 3 \& 4).}
\label{fig:simmisspecifiedpoisson}
\end{figure}

\end{document}